\newtheorem{theorem}{Theorem}
\begin{document}

\title{A simple framework for the axiomatization of
 exponential and quasi-hyperbolic discounting 
}

\author{Nina Anchugina\thanks{Department of Mathematics, The University of Auckland, Auckland, New Zealand; e-mail: {\tt n.anchugina@auckland.ac.nz}}}

\date{}

\maketitle

\begin{abstract}
The main goal of this paper is to investigate which normative requirements, or axioms, lead to exponential and quasi-hyperbolic forms of discounting. 
Exponential discounting has a well-established axiomatic foundation originally developed by \cite{koopmans1960stationary, koopmans1972representation} and \cite{koopmans1964stationary}   
with subsequent contributions by several other authors, including \citet{bleichrodt2008koopmans}. The papers by 
\citet{hayashi2003quasi} and 
\citet{olea2014axiomatization} axiomatize quasi-hyperbolic discounting.
The main contribution of this paper is to provide an alternative foundation for exponential and quasi-hyperbolic discounting, with simple, transparent axioms and relatively straightforward proofs. Using techniques by 
\citet{fishburn1982foundations} and 
\citet{harvey1986value}, we show that 
\citeauthor{anscombe1963definition}'s \citeyearpar{anscombe1963definition} version of Subjective Expected Utility theory can be readily adapted to axiomatize the aforementioned types of discounting, in both finite and infinite horizon settings.

{\bf Keywords:} Axiomatization; Exponential discounting; Quasi-\linebreak hyperbolic discounting; Anscombe-Aumann model.

{\bf JEL Classification:} D90.
\end{abstract}

\section{Introduction}
\label{intro}
The axiomatic foundation of intertemporal decisions is a fundamental question in economics and generates considerable research interest. Despite the fact that a number of possible ways of discounting have appeared in the literature so far, two types have been predominantly used: exponential discounting, first introduced by 
\citet{samuelson1937note}, and quasi-hyperbolic discounting \citep{phelps1968second, laibson1997golden}. The important question to be answered is which axioms allow us to say that the preferences of a decision maker can be represented using the discounted utility model with exponential or quasi-hyperbolic discount functions? Existing axiom systems for intertemporal decisions address this question. These systems can be roughly divided into two main groups: those with preferences over deterministic consumption streams and those with preferences over stochastic consumption streams.

The first group has been the leading approach in the area, both for exponential and quasi-hyperbolic functions. In this framework a consumption set is endowed with topological structure, and \citeauthor{debreu1960topological}'s \citeyearpar{debreu1960topological} theorem on additive representation is a key mathematical tool. 

Koopmans' result for exponential discounting with deterministic consumption streams \citep{koopmans1960stationary, koopmans1972representation, koopmans1964stationary}
remains the most well-known. A revised formulation of Koopmans' result was proposed by \citet{bleichrodt2008koopmans}, using alternative conditions on preferences. A similar approach was also suggested by \citet{harvey1986value}. The axiomatic foundation of exponential discounting for the special case of a single dated outcome was presented by \citet{fishburn1982time}. 

In a non-stochastic framework quasi-hyperbolic discounting, to our knowledge, has only been axiomatized by \citet{olea2014axiomatization}. Building on \citet{bleichrodt2008koopmans} they provide three alternative sets of axioms. \citeauthor{olea2014axiomatization}'s axiomatization will be discussed in more detail in Section \ref{sec:10}.

All the axiomatization systems mentioned above are formulated for infinite consumption streams. The finite horizon case has rarely been discussed. For exponential discounting, however, it can be found in \citet{fishburn1970utility}. 

The second group of axiomatic systems considers stochastic consumption \linebreak streams. To obtain an additive form the fundamental representation theorem of von Neumann and Morgenstern (vNM) \citeyearpar{morgenstern1947theory} is used.
 The application of this approach to exponential discounting was given by \citet{epstein1983stationary}. A consumption stream is considered to be an outcome of a lottery. The axiomatization of quasi-hyperbolic discounting by \citet{hayashi2003quasi} builds on \citeauthor{epstein1983stationary}'s \citeyearpar{epstein1983stationary} axiom system. Both Hayashi and Epstein axiomatize preferences over infinite stochastic consumption streams.
  
In this paper we work with preferences over streams of consumption lotteries, i.e., a setting in which there is a lottery in each period of time. In other words, we restrict Epstein and Hayashi's framework to product measures. This framework allows us to apply \citeauthor{anscombe1963definition}'s \citeyearpar{anscombe1963definition} result from Subjective Expected Utility Theory. The main advantage of this method is that it gives an opportunity to construct the discussed functional forms of discounting in a simpler way. Importantly, the present work establishes a unified treatment of exponential and quasi-hyperbolic discounting in both finite and infinite settings. With \citet{fishburn1982foundations} and \citet{harvey1986value} as the key sources of technical inspiration, our approach requires relatively simple axioms and facilitates proofs that are relatively straightforward. 
\section{Preliminaries}
\label{sec:1}
Assume that the objectives of a decision-maker can be expressed by a preference order $\succcurlyeq$ on the set of alternatives $X^n$, where $n$ may be $\infty$. Think of these alternatives as dated streams, for time periods $t \in \{1, 2, \ldots, n \}$\footnote{It should also be mentioned that our setting considers a discrete time space. A continuous-time framework can be found, for example, in the above-mentioned paper by \cite{fishburn1982time} and in a generalized model of hyperbolic discounting introduced by \cite{loewenstein1992anomalies}. \cite{harvey1986value} analyzes discrete sequences of timed outcomes with a continuous time space.
}.
We say that a utility function $U \colon X^n \to \mathbb{R}$ {\bf represents} this preference order, if for all ${\bf x}, {\bf y} \in X^n$, ${\bf x} \succcurlyeq {\bf y}$ if and only if $U({\bf x}) \geq U({\bf y})$.

We assume that $X$ is a {\bf mixture set}. That is, for every $x, y \in X$ and every $\lambda \in [0, 1]$, 
there exists $x \lambda y \in X$ satisfying: 
\begin{itemize} \renewcommand*\labelitemi{\textbullet}
\item $x1y=x$, 
\item $x \lambda y = y (1-\lambda) x$,
\item$(x \mu y) \lambda y = x (\lambda \mu) y$.
\end{itemize}
Since $X$ is a mixture set, the set $X^n$ is easily seen to be a mixture set under the following mixture operation: ${\bf x} \lambda {\bf y} = (x_1 \lambda y_1, \ldots, x_n \lambda y_n)$, where ${\bf x}, {\bf y} \in X^n$ and $\lambda \in [0, 1]$.

The utility function $u \colon X \to \mathbb{R}$ is called {\bf mixture linear} (or just linear, where no confusion is likely to arise) if for every $x, y \in X$ we have $u(x \lambda y)=\lambda u(x) + (1-\lambda) u(y)$ for every $\lambda \in [0, 1]$.

The binary relation $\succcurlyeq$ on $X^n$ induces a binary relation (also denoted $\succcurlyeq$) on $X$ in the usual way: for any $x, y \in X$ the preference $x \succcurlyeq y$ holds if and only if $(x, x,\ldots, x)\succcurlyeq (y, y,\ldots, y)$. 

The function $U$ is called a {\bf discounted utility function} if
\[
U({\bf x})=\sum_{t=1}^n D(t)u(x_t),
\]
for some non-constant $u \colon X \to \mathbb{R}$ and some $D\colon \mathbb{N}\to \mathbb{R}$ with $D(1)=1$. The function $D$ is called the {\bf discount function}. If $u$ is linear (and non-constant), then the function $U$ is called a {\bf discounted expected utility function}.

There are two types of discount functions which are commonly used in modeling of time preferences:

\begin{itemize} \renewcommand*\labelitemi{\textbullet}
\item {\bf Exponential discounting}: $D(t)=\delta^{t-1}$, where $\delta \in (0,1)$.
\item {\bf Quasi-hyperbolic discounting}: \begin{equation*} D(t) = 
	\left\{
					\begin{array}{lr}
						\ 1 &\ \text{if} \ t=1,\\			
						 \ \beta\delta^{t-1}&\text{if} \ t\geq 2.
					\end{array}
	\right.
\end{equation*}
for some $\delta \in (0, 1)$ and $\beta \in (0, 1]$.

\end{itemize}

The important characteristic of quasi-hyperbolic discounting is that it exhibits present bias. Present bias means that delaying two consumption streams from present ($t=1$) to the immediate future ($t=2$) can change the preferences of a decision-maker between these consumption streams.

The results of the recent experiments by \cite{chark2015extended} show that decision-makers are decreasingly impatient within the near future, however they discount the remote future at a constant rate.
In other words, present bias may extend over the present moment $(t=1)$ to the near future $t>2$, with a constant discount factor from some period $T$. This gives a further generalization of quasi-hyperbolic discounting, which we call {\bf semi-hyperbolic discounting}:
\begin{equation*} D(t) = 
	\left\{
					\begin{array}{lr}
						\ 1 &\ \text{if} \ t=1,\\			
						 \ \displaystyle\prod_{i=1}^{t-1}\beta_i\delta &\ \text{if} \ 1<t \leq T,\\
						 \ \delta^{t-T}\displaystyle\prod_{i=1}^{T-1}\beta_i\delta & \ \text{if } \ t>T.
					\end{array}
	\right.
\end{equation*}
We use $SH(T)$ to denote this discount function (for given $\delta, \beta_1, \ldots, \beta_{T-1}$). This form of discounting was previously applied to model the time preferences of a decision-maker in a consumption-savings problem \citep{young2007generalized}. Our $SH(T)$ specification is not quite the same as the notion of semi-hyperbolic discounting used in \cite{olea2014axiomatization}. They apply the term to any discount function which satisfies $D(t)=\delta^{t-T}D(T)$ for all $t>T$ (for some $T$). This class includes $SH(T)$, but is wider.
The possibility of generalizing quasi-hyperbolic discounting was earlier suggested by \cite{hayashi2003quasi}. The form of the discount function he proposed is:
\begin{equation*} D(t) = 
	\left\{
					\begin{array}{lr}
						\ 1 &\ \text{if} \ t=1,\\			
						 \ \displaystyle\prod_{i=1}^{t-1}\beta'_i &\ \text{if} \ 1<t \leq T,\\
						 \ \delta^{t-T}\displaystyle\prod_{i=1}^{T-1}\beta'_i & \ \text{if } \ t>T.
					\end{array}
	\right.
\end{equation*}
By substituting $\delta \beta_t = \beta'_t$ for all $t\leq {T-1}$ it is not difficult to see that semi-hyperbolic discounting $SH(T)$ coincides with the form suggested by \cite{hayashi2003quasi}. It is worth mentioning that he does not provide an axiomatization of this form of discounting, pointing out that this case is somewhat complicated. In our framework, however, the axiomatization of semi-hyperbolic discounting can be obtained as a relatively straightforward extension of the axiomatization of quasi-hyperbolic discounting.

The evidence of \citet{chark2015extended} on extended present bias suggests the following restrictions on the coefficients in $SH(T)$: $\beta_1<\beta_2<\ldots<\beta_{T-1}$. In our version of $SH(T)$ we will impose the weaker requirements $\beta_1 \leq \beta_2 \leq \ldots \leq \beta_{T-1}$, and $\beta_t\in (0, 1]$ for all $t\leq T-1$ and $\delta \in (0, 1)$.
Imposing these restrictions gives some advantages, as it can be immediately seen that exponential and quasi-hyperbolic discounting are the special cases of semi-hyperbolic discounting: $SH(1)$ is the exponential discount function, whereas $SH(2)$ is the quasi-hyperbolic discount function.

\section{AA representations}

We say that the preference order $\succcurlyeq$ on $X^n$ has an {\bf Anscombe and Aumann (AA) representation}, if for every ${\bf x}, {\bf y} \in X^n$:
\[ 
{\bf x}\succcurlyeq {\bf y} \text{ if and only if } \sum_{t=1}^ n w _tu(x_t) \geq \sum_{t=1}^ n w_tu(y_t), 
\]
where $u\colon X \to \mathbb{R}$ is non-constant and linear and  $w_t \geq 0$ for each $t$ with at least one $w_t>0$. We also say that the pair $(u, {\bf w})$ provides an AA representation for $\succcurlyeq$. 

A pre-condition for obtaining discounting in an exponential or quasi-hyperbolic form is additive separability. In the framework of preferences over streams of lotteries, \citeauthor{anscombe1963definition}'s \citeyearpar{anscombe1963definition} theorem provides axioms which give an additively separable representation when $n < \infty$. Anscombe and Aumann formulated their result for acts rather than temporal streams. Here, states of the world are replaced by time periods. 

\subsection{Finite case $(n< \infty)$}
\label{sec:2}
For $n<\infty$ the following axioms are necessary and sufficient for an AA representation:

\begin{description}
\item [Axiom F1.] (Weak order). $\succcurlyeq$ is a weak order on $X^n$.
\item [Axiom F2.] (Non-triviality). There exist some ${a, b} \in X$ such that 
\[
(a, a, \ldots, a) \succ (b, b, \ldots, b).
\] 
\item [Axiom F3.] (Mixture independence). ${\bf x}\succcurlyeq {\bf y}$ if and only if ${\bf x} \lambda {\bf z}\succcurlyeq {\bf y}  \lambda {\bf z}$ for every $\lambda \in (0, 1)$ and every ${\bf x}, {\bf y}, {\bf z} \in X^n$.
\item [Axiom F4.] (Mixture continuity). For every ${\bf x}, {\bf y}, {\bf z} \in X^n$ the sets \linebreak  $\{ \alpha \colon {\bf x} \alpha {\bf z} \succcurlyeq {\bf y} \}$ and $\{ \beta \colon  {\bf y} \succcurlyeq {\bf x} \beta {\bf z} \}$ are closed subsets of the unit interval.
\item [Axiom F5.] (Monotonicity). For every ${\bf x}, {\bf y} \in X^n$ if $x_t \succcurlyeq y_t$ for every $t$ then ${\bf x} \succcurlyeq {\bf y}$.
\end{description}

\begin{theorem}[AA]
\label{SEU}
The preferences $\succcurlyeq$ on $X^n$ satisfy axioms F1-F5 if and only if there exists an AA representation for $\succcurlyeq$ on $X^n$. 
If $(u, {\bf w})$ and $(u^{\prime}, {\bf w^{\prime}})$ both provide AA representations for $\succcurlyeq$ on $X^n$, then $u=Au^{\prime}+B$ for some $A>0$ and some $B$, and ${\bf w}=C {\bf w}^{\prime}$ for some $C>0$.
\end{theorem}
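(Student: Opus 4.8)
The plan is to reproduce the \citet{anscombe1963definition} argument with the finitely many time periods playing the role of states. First, observe that Axioms F1, F3 and F4 are exactly the hypotheses of the mixture-space expected-utility theorem of \citet{fishburn1982foundations}, applied to $X^n$ equipped with its coordinatewise mixture operation; hence there is a mixture-linear $U\colon X^n\to\mathbb{R}$ representing $\succcurlyeq$, and $U$ is unique up to a positive affine transformation. The converse direction of the theorem is a routine verification: if $U(\mathbf{x})=\sum_t w_t u(x_t)$ with $u$ linear and non-constant, $w_t\ge 0$, $\sum_t w_t>0$, then $U$ is a mixture-linear representation, the induced order on $X$ is represented by $u$, and F1--F4 follow at once; F5 is the only non-automatic axiom and holds because $x_t\succcurlyeq y_t$ forces $u(x_t)\ge u(y_t)$ for every $t$.

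For the ``only if'' direction, the first real step is to extract additive separability across periods from mixture linearity alone. Fix a reference $\bar x\in X$, write $\bar{\mathbf{x}}=(\bar x,\dots,\bar x)$, and for each $t$ let $e_t(x)\in X^n$ be the stream with $x$ in coordinate $t$ and $\bar x$ elsewhere. The identity I would exploit is that the equal-weights $n$-fold mixture of $e_1(x_1),\dots,e_n(x_n)$ (formed by iterating the binary mixture) coincides, coordinate by coordinate, with $\mathbf{x}\,(1/n)\,\bar{\mathbf{x}}$, since in slot $s$ both reduce to $x_s\,(1/n)\,\bar x$. Evaluating the mixture-linear $U$ on this common element in two ways gives $U(\mathbf{x})=U(\bar{\mathbf{x}})+\sum_{t=1}^{n}\bigl(U(e_t(x_t))-U(\bar{\mathbf{x}})\bigr)$. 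Writing $\psi_t(x):=U(e_t(x))-U(\bar{\mathbf{x}})$ and replacing $U$ by $U-U(\bar{\mathbf{x}})$ (still a representation), we may assume $U(\mathbf{x})=\sum_t\psi_t(x_t)$ with each $\psi_t$ mixture-linear and $\psi_t(\bar x)=0$.

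The second step collapses the $\psi_t$ onto a single index. Set $\Psi:=\sum_t\psi_t$; it is mixture-linear, represents the induced order on $X$, and is non-constant by F2. Monotonicity (F5) gives: if $x\succcurlyeq y$ in $X$, i.e.\ $\Psi(x)\ge\Psi(y)$, then $e_t(x)\succcurlyeq e_t(y)$ (they agree off coordinate $t$), so $\psi_t(x)\ge\psi_t(y)$, for every $t$. I would then use the following lemma: if $f,g\colon X\to\mathbb{R}$ are mixture-linear, $g$ is non-constant, and $g(x)\ge g(y)\Rightarrow f(x)\ge f(y)$, then $f=ag+b$ with $a\ge 0$ --- because $f$ factors as $h\circ g$ for a non-decreasing $h$ on the interval $g(X)$, and mixture linearity forces $h(\lambda r+(1-\lambda)s)=\lambda h(r)+(1-\lambda)h(s)$ on $g(X)$, whence $h(r)=ar+b$ with $a\ge 0$ by monotonicity. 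Applying this with $g=\Psi$ yields $\psi_t=a_t\Psi+b_t$; the normalisation $\psi_t(\bar x)=0$ lets us write $\psi_t=w_t u$ with $u:=\Psi-\Psi(\bar x)$ non-constant and linear and $w_t:=a_t\ge 0$, and summing, $(\sum_t w_t)\,u=\Psi$ forces $\sum_t w_t>0$. Thus $U(\mathbf{x})=\sum_t w_t u(x_t)$ is the required AA representation.

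For uniqueness, if $(u,\mathbf{w})$ and $(u',\mathbf{w}')$ both represent $\succcurlyeq$, the associated $U,U'$ are mixture-linear representations, so $U=\alpha U'+\gamma$ with $\alpha>0$ by the first step; restricting to constant streams gives $u=Au'+B$ with $A>0$, and substituting this back into $U=\alpha U'+\gamma$ and matching, for each fixed $s$, the coefficient of $u'(x_s)$ as $x_s$ ranges over $X$ (legitimate since $u'$ is non-constant) yields $\mathbf{w}=C\mathbf{w}'$ with $C=\alpha/A>0$. The main obstacle is the lemma of the previous paragraph: combining monotonicity with non-triviality is exactly what is needed to force both a shared utility index and non-negative weights, and it is the one place where a mild regularity argument (upgrading ``affine for all convex combinations and monotone'' to genuinely affine) enters; the appeal to the mixture-space von Neumann--Morgenstern theorem, the coordinatewise mixture identity, and the uniqueness bookkeeping are all routine.
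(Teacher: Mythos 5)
Your proposal is correct, and it follows essentially the route the paper itself indicates (the paper offers no explicit proof of Theorem~\ref{SEU}, only the remark that it can be assembled from \citet{fishburn1982foundations} and \citet{ryan2009generalizations}): apply the mixture-set vNM theorem to $X^n$, extract additive separability via the equal-weights coordinatewise mixture identity, and use monotonicity plus non-triviality to collapse the period utilities onto a common index with non-negative weights. The details you supply --- including the factorisation lemma and the uniqueness bookkeeping --- check out.
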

The proof of the theorem for the general mixture set environment can easily be constructed by combining the arguments in \citet{fishburn1982foundations} and \citet{ryan2009generalizations}.
 Evidently, the key axiom here is the condition of mixture independence. It is a strong axiom which imposes an additive structure.

\subsection{Infinite case ($n=\infty$)}
\label{sec:3}
Anscombe and Aumann's result may be extended to the infinite horizon case. One possible extension is given by \cite{fishburn1982foundations}. However, we give a slightly modified version which incorporates ideas from \cite{harvey1986value}. 

Fix some $x_0\in X$. We refer to the same $x_0$ throughout the rest of the paper.
A consumption stream ${\bf x}$ is called {\bf ultimately constant} if there exists 
$T$ such that ${\bf x}=(x_1, \ldots, x_T, x_0, x_0, \ldots)$. Note that there is a difference from the usage of this definition in \cite{bleichrodt2008koopmans} and \cite{olea2014axiomatization}, where $x_0$ can be arbitrary. Let $X_T$ be the set of ultimately constant consumption streams of length $T$. 
Denote the union of the sets $X_T$ over all $T$ as $X^{*}$. Let $X^{**}$ be the union of $X^*$ and all constant streams.
It is not hard to see that both $X^*, X^{**} \subset X^\infty$ are mixture sets.

We must mention that the fixed $x_0$ serves two purposes: firstly, it will be needed to state the convergence axiom; and secondly, it allows us to define the class $X^*$ of ultimately constant streams in a way that makes them a strict subset of the usually defined class. Since some of the axioms only restrict preferences over $X^{**}$ this second aspect confers some advantages.

\begin{description}
\item [Axiom I1.] (Weak order). $\succcurlyeq$ is a weak order on $X^\infty$.
\item [Axiom I2.] (Non-triviality). There exist some ${a, b} \in X$ such that \linebreak $a \succ x_0 \succ b$. 
\end{description}
Axiom I2 implies that $x_0$ is an interior point with respect to preference. It restricts both $\succcurlyeq$ and the choice of the fixed element $x_0$. 
\begin{description}
\item [Axiom I3.] (Mixture independence). ${\bf x}\succcurlyeq {\bf y}$ if and only if ${\bf x} \lambda {\bf z}\succcurlyeq {\bf y}  \lambda {\bf z}$ for every $\lambda \in (0, 1)$ and every ${\bf x},{\bf y}, {\bf z} 
\in X^{**}$.
\item [Axiom I4.] (Mixture continuity). For every ${\bf x}, {\bf z} \in X^{**}$ and every ${\bf y} \in X^\infty$  the sets $\{ \alpha \colon {\bf x} \alpha {\bf z} \succcurlyeq {\bf y} \}$ and $\{ \beta \colon  {\bf y} \succcurlyeq {\bf x} \beta {\bf z} \}$ 
are closed subsets of the unit interval.
\item [Axiom I5.] (Monotonicity). For every ${\bf x}, {\bf y} \in  X^\infty$: if $x_t \succcurlyeq y_t$ for every $t$ then ${\bf x} \succcurlyeq {\bf y}$.
\end{description}
We have applied a weaker version of the monotonicity axiom in comparison with the interperiod monotonicity used by Fishburn. However, Axiom I5 is sufficient to obtain an AA representation.

For the statement of the next axiom we need to introduce some notation. Let $[a]_k=(x_0, \ldots, x_0, a, x_0, \ldots)$ where $a\in X$ is in the k\textsuperscript{th} position. Using this notation, we state the following axiom: 

\begin{description}
\item [Axiom I6.](Convergence). For every ${\bf x} = (x_1, x_2, \ldots) \in X^\infty$, every $x^+, x^- \in X$ and every $k$:
\begin{itemize} \renewcommand*\labelitemi{\textbullet}
\item if $[x^+]_k\succ [x_k]_k$ there exists $T^+\geq k$ such that 
\[
{\bf x} \preccurlyeq {\bf x}_{k,T}^+ \ \text{for all }T \geq T^+,
\]
where ${\bf x}_{k,T}^+=(x_1, x_2, \ldots, x_{k-1}, x^+, x_{k+1}, \ldots, x_T, x_0, x_0, \ldots)$;
\item if $[x^-]_k \prec [x_k]_k$ there exists $T^- \geq k$ such that 
\[
{\bf x} \succcurlyeq {\bf x}_{k,T}^- \ \text{for all }T \geq T^-,
\]
where ${\bf x}_{k,T}^-=(x_1, x_2, \ldots, x_{k-1}, x^-, x_{k+1}, \ldots, x_T, x_0, x_0, \ldots)$.
\end{itemize}
\end{description}

Our convergence axiom differs from Axiom B6, that was used by Fishburn: 
\begin{description}
\item [Axiom B6.]
For some $\hat{x}\in X$, every ${\bf x}, {\bf y} \in X^\infty$ and every $\lambda \in (0, 1)$:
\begin{itemize} \renewcommand*\labelitemi{\textbullet}
\item if ${\bf x} \succ {\bf y}$, then there exists $T$ such that $(x_1, \ldots, x_n, \hat{x}, \hat{x}, \ldots) \succcurlyeq {\bf x} \lambda {\bf y}$ for all $n \geq T$;
\item if ${\bf x} \prec {\bf y}$, then there exists $T$ such that $(x_1, \ldots, x_n, \hat{x}, \hat{x}, \ldots) \preccurlyeq {\bf x} \lambda {\bf y}$ for all $n \geq T$.
\end{itemize} \end{description}
Instead, Axiom I6 adapts ideas from \cite{harvey1986value}\footnote{It is worth mentioning that Fishburn's motivation for the convergence axiom B6 looks somewhat contrived in the context of acts \citep[p. 113]{fishburn1982foundations}. However, it becomes very natural in the context where states of the world are re-interpreted as periods of time.}. Axiom I6 is more appealing for our purposes as it not only guarantees the convergence of the AA representation, but also allows us to relax two axioms, mixture independence and mixture continuity, which are no longer required to hold on all of $X^\infty$. 

We thus obtain the following representation:
\begin{theorem}[Infinite AA]
\label{ISEU}
The preferences $\succcurlyeq$ on $X^\infty$ satisfy axioms I1-I6 if and only if there exists an AA representation for $\succcurlyeq$ on $X^\infty$. 
If $(u, {\bf w})$ and $(u^{\prime}, {\bf w^{\prime}})$ both provide AA representations for $\succcurlyeq$ on $X^\infty$ , then $u=Au^{\prime}+B$ for some $A>0$ and some $B$, and $w=C w^{\prime}$ for some $C>0$.
\end{theorem}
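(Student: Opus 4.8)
The plan is to bootstrap Theorem~\ref{ISEU} from the finite-horizon result, Theorem~\ref{SEU}, applied to the increasing family of sets $X_T$ of ultimately constant streams, and then to use the convergence axiom I6 to pass to the limit. The ``only if'' direction is routine: given an AA representation $(u,{\bf w})$, axioms I1, I2 and I5 are immediate (I2 because $u$ is non-constant and $x_0$ is interior, I5 because $w_t\ge 0$), I3 follows from mixture-linearity of $u$, I4 holds because $\alpha\mapsto\sum_t w_t u\big(({\bf x}\,\alpha\,{\bf z})_t\big)$ is affine, hence continuous, for ${\bf x},{\bf z}\in X^{**}$, and I6 holds because the series $\sum_t w_t u(x_t)$ converges, so a fixed positive improvement $w_k\big(u(x^{+})-u(x_k)\big)$ at coordinate $k$ eventually dominates the discarded tail.

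For the ``if'' direction, fix $T$ and identify $X_T$ with $X^T$ by deleting the $x_0$-tail. One first checks that the restriction of $\succcurlyeq$ to $X_T$ satisfies F1--F5: F1, F3, F4, F5 are restrictions of I1, I3, I4, I5, and F2 (non-triviality on $X_T$) is derived from I2 using monotonicity and mixture independence. Theorem~\ref{SEU} then gives an AA representation $(u_T,{\bf w}^T)$ of $\succcurlyeq$ on $X_T$. Since $X_T\subseteq X_{T+1}$, the uniqueness part of Theorem~\ref{SEU} forces the representations on successive levels to agree up to a common positive affine transformation of the utility index and a common positive rescaling of the weights; normalising $u(x_0)=0$ and fixing $u$ at one point where $\succcurlyeq$ is non-trivial removes this freedom. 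We thereby obtain a single mixture-linear $u\colon X\to\mathbb{R}$ and a single sequence $w_t\ge 0$ such that $W_T({\bf x}):=\sum_{t\le T}w_t u(x_t)$ represents $\succcurlyeq$ on each $X_T$, hence on $X^{*}$; the constant streams in $X^{**}$ are then handled directly.

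The step that carries the weight is the passage to $X^\infty$. Given ${\bf x}\in X^\infty$ with truncations ${\bf x}^T\in X_T$, pick a coordinate $k$ with $w_k>0$ and, using $a\succ x_0\succ b$ from I2, small perturbations $x^{+}$, $x^{-}$ of $x_k$ (mixtures of $x_k$ with $a$, resp.\ $b$) that are strictly better, resp.\ worse, than $x_k$ in the induced order (with routine adjustments if $x_k$ is an extreme point of $\succcurlyeq$). Both clauses of I6 at coordinate $k$ then give ${\bf x}_{k,T}^{-}\preccurlyeq{\bf x}\preccurlyeq{\bf x}_{k,T}^{+}$ for all large $T$. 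Cross-comparing two such levels $T,T'$ and using that $W_T$ represents $\succcurlyeq$ on $X^{*}$ yields $\big|W_T({\bf x})-W_{T'}({\bf x})\big|\le w_k\big(u(x^{+})-u(x^{-})\big)$ whenever $T,T'$ are large (the threshold depending on the perturbation), and the right-hand side $\to 0$ as the perturbation shrinks; hence the partial sums are Cauchy, so $W({\bf x}):=\sum_{t\ge1}w_t u(x_t)$ is well defined (in particular $\sum_t w_t<\infty$). Finally, since the values of $W$ on $X^{*}$ cover a sufficiently rich range, one squeezes any two streams in $X^\infty$ between common elements of $X^{*}$ to conclude ${\bf x}\succcurlyeq{\bf y}\iff W({\bf x})\ge W({\bf y})$, i.e.\ $(u,{\bf w})$ is an AA representation. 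Uniqueness on $X^\infty$ follows by restriction: the constant streams give $u=Au'+B$ with $A>0$, and each $X_T$ together with the uniqueness clause of Theorem~\ref{SEU} gives ${\bf w}=C{\bf w}'$ with a single $C>0$.

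The main obstacle I anticipate is precisely this last passage: showing that I6 forces convergence of $\sum_t w_t u(x_t)$ for \emph{every} stream and that the limiting functional still represents $\succcurlyeq$ on all of $X^\infty$ (not merely on $X^{*}$), and checking that the finite-horizon non-triviality needed to launch the construction is actually available. This is where the specific Harvey-style form of I6 and the interiority of $x_0$ built into I2 do the real work.
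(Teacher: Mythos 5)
Your overall architecture coincides with the paper's in the parts that carry the real weight: an additive representation is first obtained on the ultimately constant streams; Axiom I6 is then used exactly as you describe (two-sided perturbations at a coordinate with positive weight, comparison of two truncation levels) to show the partial sums are Cauchy; arbitrary streams are squeezed between elements of $X^*$ to extend the representation; and uniqueness is obtained by restriction to constant streams and to finite horizons. Where you differ is the launch. The paper does not invoke Theorem~\ref{SEU} on each $X_T$. It applies Fishburn's mixture-set theorem twice --- once to $X$ to obtain a linear $u$, and once to the set of utility vectors to obtain a mixture-linear $U$ --- and then derives additivity from the fact that a mixture-linear functional is genuinely linear, so its restriction to $\mathbb{R}^T$ is a weighted sum. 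This matters because Fishburn's theorem tolerates degenerate preferences, whereas Theorem~\ref{SEU} requires non-triviality.

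That is exactly where your proposal has a gap. To apply Theorem~\ref{SEU} to $X_T\cong X^T$ you need Axiom F2 there, i.e.\ $(a,\dots,a,x_0,x_0,\dots)\succ(b,\dots,b,x_0,x_0,\dots)$ for some $a,b\in X$. You assert this follows from I2 ``using monotonicity and mixture independence,'' but neither delivers it: I2 strictly ranks genuinely constant streams, monotonicity only transfers a \emph{weak} preference to their truncations, and mixture independence on $X^{**}$ can never carry a constant stream into $X_T$, since the tail of any mixture ${\bf a}\lambda{\bf z}$ with ${\bf a}$ constant is $a\lambda x_0\neq x_0$. A preference given by a Banach-limit-type functional of $\bigl(u(x_t)\bigr)_t$ satisfies I2, I3 and I5 yet is completely indifferent on $X^*$, so the implication you claim is simply false as stated; ruling such preferences out requires more of the axiom system and, in effect, the whole construction. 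The paper sidesteps this by using a representation theorem that allows all weights to vanish and by establishing $w_t>0$ for some $t$ only at the very end, via the completed representation and I2. You flag this worry in your closing paragraph, but the resolution offered in the body is not valid. A related smaller point: when you patch the level-$T$ representations via the uniqueness clause of Theorem~\ref{SEU}, the scalar relating ${\bf w}^T$ to the restriction of ${\bf w}^{T+1}$ must be pinned down by normalising at a period with strictly positive weight, whose existence is the same unresolved issue. The remainder of your outline (necessity, the Cauchy step, the squeeze, uniqueness) matches the paper's Steps 3--5 and is sound, modulo the extreme-point cases you mention only in passing.
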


The proof of Theorem \ref{ISEU} is given in the Appendix. It combines elements of the arguments in \citet{fishburn1982foundations}, \citet{harvey1986value} and \citet{ryan2009generalizations}.

\section{Discounted utility: finite case ($n<\infty$)}
\label{sec:4}
\subsection{Exponential discounting}
\label{sec:5}
Recall that a preference $\succcurlyeq$ on $X^n$ is represented by an exponentially discounted utility function if there exists a non-constant function $u \colon X \to \mathbb{R}$ and a parameter $\delta \in (0, 1)$ such that 
\[
U({\bf x}) = \sum _{t=1}^n \delta^{t-1}u(x_t). 
\]
If $u$ is linear (and non-constant), then we say that the pair $(u, {\bf \delta})$ provides an exponentially discounted expected utility representation.

Based on Theorem \ref{SEU} it is easy to obtain such a representation. In order to do so an adjustment of non-triviality and two additional axioms - impatience and stationarity - are required.
\begin{description}
\item [Axiom F2$^\prime$.] (Essentiality of period 1). There exist some $a, b \in X$ and some ${\bf x}\in X^n$ such that $(a, x_2, \ldots, x_n) \succ (b, x_2, \ldots, x_n)$. 
\end{description}

\begin{description}
\item [Axiom F6.] (Impatience). For all $a, b \in X$ if $a\succ b$, then for all ${\bf x} \in X^n$
\[
(a, b, x_3,\ldots,x_n) \succ (b, a, x_3, \ldots, x_n).
\] 
\item [Axiom F7.] (Stationarity). The preference ($a, x_2,\ldots, x_n)\succcurlyeq(a, y_2,\ldots, y_n)$ holds if and only if 
$(x_2,\ldots, x_n, a)\succcurlyeq(y_2,\ldots, y_n, a)$ for every $a\in X$ and every ${\bf x}, {\bf y} \in X^n$.
\end{description}

It is not hard to see that essentiality of each period $t$ follows from the essentiality of period 1 and the stationarity axiom.

Now the following result can be stated:

\begin{theorem}[Exponential discounting]
\label{EDF}
The preferences $\succcurlyeq$ on $X^n$ satisfy axioms F1, F2$^\prime$, F3-F7 if and only if there exists an exponentially discounted expected utility representation for $\succcurlyeq$ on $X^n$. If $(u, {\bf \delta})$ and $(u^{\prime}, {\bf \delta^{\prime}})$ both provide exponentially discounted expected utility representations for $\succcurlyeq$ on $X^n$, then $u=Au^{\prime}+B$ for some $A>0$ and some $B$, and $\delta=\delta'$.

\end{theorem}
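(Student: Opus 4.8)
The plan is to derive a general AA representation from Theorem~\ref{SEU} and then let the non-triviality, impatience and stationarity axioms shape the weight vector. First I would note that Axiom~F2$'$ implies Axiom~F2: if the induced order on $X$ were trivial, then for the $a,b,{\bf x}$ provided by F2$'$ we would have $a\sim b$, so applying Axiom~F5 in both directions to the streams $(a,x_2,\dots,x_n)$ and $(b,x_2,\dots,x_n)$ --- which agree outside period~$1$ --- would yield $(a,x_2,\dots,x_n)\sim(b,x_2,\dots,x_n)$, contradicting F2$'$. Hence F1--F5 all hold and Theorem~\ref{SEU} supplies a non-constant linear $u$ and nonnegative weights ${\bf w}\neq{\bf 0}$ representing $\succcurlyeq$; since $\sum_t w_t>0$, the induced order on $X$ is represented by $u$. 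A second use of F2$'$ shows $w_1>0$ (otherwise the two streams in F2$'$ would have equal utility), so by the uniqueness part of Theorem~\ref{SEU} I may rescale ${\bf w}$ to make $w_1=1$.

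Next I would read the remaining axioms through the representation. Because $u$ is non-constant there are $a\succ b$, i.e.\ $u(a)>u(b)$, and Axiom~F6 becomes $(w_1-w_2)(u(a)-u(b))>0$, so $w_1>w_2$. For Axiom~F7, the common first coordinate cancels on each side, leaving the requirement that for all tuples with entries in $u(X)$,
\[
\sum_{t=2}^{n} w_t\,u(x_t)\ \ge\ \sum_{t=2}^{n} w_t\,u(y_t)
\qquad\Longleftrightarrow\qquad
\sum_{t=2}^{n} w_{t-1}\,u(x_t)\ \ge\ \sum_{t=2}^{n} w_{t-1}\,u(y_t).
\]
Thus the linear functionals $(w_2,\dots,w_n)$ and $(w_1,\dots,w_{n-1})$ induce the same weak order on $u(X)^{\,n-1}$. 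Since $u$ is mixture-linear and non-constant, $u(X)$ is a non-degenerate interval, so the relevant differences fill a neighbourhood of the origin in $\mathbb{R}^{n-1}$, and two linear functionals agreeing in sign there --- with $(w_1,\dots,w_{n-1})\neq{\bf 0}$ because $w_1=1$ --- must be positive multiples: $(w_2,\dots,w_n)=\delta(w_1,\dots,w_{n-1})$ for a unique $\delta>0$. Componentwise this reads $w_{t+1}=\delta w_t$, hence $w_t=\delta^{t-1}$, and $w_1>w_2$ forces $\delta<1$. So $(u,\delta)$ is an exponentially discounted expected utility representation.

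For the converse I would verify directly that any representation $U({\bf x})=\sum_t\delta^{t-1}u(x_t)$ with $\delta\in(0,1)$ and $u$ linear non-constant satisfies F1, F2$'$, F3--F7; each check is routine, with F6 using $\delta<1$ and F7 using $\delta^{t-1}=\delta\cdot\delta^{t-2}$. Uniqueness then follows from Theorem~\ref{SEU}: two exponential representations have ${\bf w}=C{\bf w}'$, and equating first coordinates gives $C=1$, hence $\delta=\delta'$, while $u=Au'+B$ with $A>0$ is inherited verbatim.

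I expect the main obstacle to be the stationarity step: one must be sure that $u(X)$ has nonempty interior (so that the order it induces genuinely determines $(w_1,\dots,w_{n-1})$ and $(w_2,\dots,w_n)$ up to a common scalar) and that the connecting scalar is strictly positive rather than merely nonzero --- both of which rest on non-constancy of $u$ together with $w_1>0$, and this is the one place where the argument has to be made carefully.
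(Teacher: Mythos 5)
Your proposal is correct and follows essentially the same route as the paper: obtain an AA representation from Theorem~\ref{SEU}, use stationarity to show $(w_2,\dots,w_n)$ is a positive multiple of $(w_1,\dots,w_{n-1})$ so that $w_{t+1}=\delta w_t$ and $w_t=\delta^{t-1}$, use impatience to force $\delta<1$, and read off uniqueness from the AA uniqueness clause with $t=1$ giving $C=1$. The only (minor) difference is that the paper executes the stationarity step by viewing the two sides as AA representations of an induced preference on $X^{n-1}$ and citing the uniqueness part of Theorem~\ref{SEU}, whereas you prove the required proportionality of the two linear functionals on $u(X)^{n-1}$ directly; the careful points you flag (non-degeneracy of $u(X)$ and strict positivity of the connecting scalar) are exactly what that uniqueness clause packages up.
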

\begin{proof}
It is straightforward to show that the axioms are implied by the representation. Conversely, suppose the axioms hold.
Note that non-triviality follows from essentiality of period 1 and monotonicity.

By Theorem 1 we therefore know that $\succcurlyeq$ has an AA representation $(u, {\bf w})$.
Define $\succcurlyeq^{\prime}$ on $X^{n-1}$ as follows:
\[
(x_1,\ldots, x_{n-1})\succcurlyeq^{\prime} (y_1,\ldots, y_{n-1}) \Leftrightarrow (x_0, x_1,\ldots, x_{n-1})\succcurlyeq(x_0, y_1,\ldots, y_{n-1}).
\]
Then $\succcurlyeq^{\prime}$ is represented by:
\[
U^{\prime} ({\bf x}) =w_2 u(x_1)+\ldots+w_n u(x_{n-1}). 
\]
Next, define $\succcurlyeq^{\prime \prime}$ on $X^{n-1}$ as follows:
\[
(x_1,\ldots, x_{n-1})\succcurlyeq^{\prime \prime} (y_1,\ldots, y_{n-1}) \Leftrightarrow (x_1,\ldots, x_{n-1}, x_0)\succcurlyeq(y_1,\ldots, y_{n-1}, x_0). 
\]
Then $\succcurlyeq^{\prime \prime}$ is represented by:
\[
U^{\prime \prime} ({\bf x})=w_1 u(x_1)+\ldots+w_{n-1} u(x_{n-1}).
\]
According to stationarity, these preferences are equivalent $\left( \succcurlyeq^{\prime}\equiv \succcurlyeq^{\prime \prime} \right )$ with two different AA representations ($U^{\prime}$ and $U^{\prime \prime}$). Preference orders $\succcurlyeq^{\prime}\equiv \succcurlyeq^{\prime\prime}$ satisfy the AA axioms on $X^{n-1}$.
Recall that $w_t$ are unique up to a scale. Hence, $w_{t+1}=\delta w_t$ for some $\delta>0$ and it follows that
\[
w_n=\delta w_{n-1}=\delta^2 w_{n-2}=\ldots= \delta^{n-t} w_t=\ldots=\delta^{n-1} w_1.
\]
Since all periods are essential it is without loss of generality to set $w_1=1$. Then we obtain the following representation for $\succcurlyeq$ on $X^n$:
\[
U({\bf x}) = \sum_{t=1}^n\delta^{t-1}u(x_t), \quad \text{where} \thickspace \delta>0.
\]
Since impatience holds: if $a\succ b$, then 
\[
(a, b, x_3,\ldots, x_n)\succ (b, a, x_3,\ldots, x_n).
\] 
From the representation it follows that:
\[
u(a)+\delta u(b) > u(b)+\delta u(a),
\]
 or, equivalently,
\[
(1-\delta) (u(a) - u(b)) > 0.
\]
As $u(a)>u(b)$ , it is possible to conclude that $\delta \in (0,1)$.

Suppose that $(u, {\bf \delta})$ and $(u^{\prime}, {\bf \delta^{\prime}})$ both provide exponentially discounted expected utility representations for $\succcurlyeq$ on $X^n$. Since $(u, {\bf \delta})$ and $(u^{\prime}, {\bf \delta^{\prime}})$ both provide AA representations for $\succcurlyeq$ it follows that  $u=Au^{\prime}+B$ for some $A>0$ and some $B$, and there is some $C>0$ such that $\delta^{t-1}=C(\delta^{\prime})^{t-1}$ for all $t$. Taking $t=1$ we obtain $C=1$, and hence $\delta=\delta^{\prime}$.
\end{proof}

\subsection{Semi-hyperbolic discounting}

A preference $\succcurlyeq$ on $X^n$ has a $SH(T)$ discounted utility representation if there exists a non-constant function $u \colon X \to \mathbb{R}$ and parameters $\beta_1 \leq \beta_2 \leq \ldots \leq \beta_{T-1}$, and $\beta_t\in (0, 1]$ for all $t\leq T-1$ and $\delta \in (0, 1)$ such that the following function represents $\succcurlyeq$:
\begin{equation*}
\begin{split}
U({\bf x}) = u(x_1)+\beta_1 \delta u(x_2)+\beta_1 \beta_2 \delta^2 u(x_3)+\ldots+\beta_1 \beta_2\cdots \beta_{T-2} \delta^{T-2}u(x_{T-1}) \\ +\beta_1 \beta_2\cdots \beta_{T-1} \sum _{t=T}^n \delta^{t-1}u(x_t).
\end{split}
\end{equation*}
If $u$ is linear (and non-constant), then the function $U$ is called a {\bf $SH(T)$ discounted expected utility representation}. In this case, we say that $(u, \boldsymbol{\beta}, \delta)$ provides a $SH(T)$ discounted expected utility representation, where $\boldsymbol{\beta}=(\beta_1, \beta_2, \ldots, \beta_{T-1})$.

To obtain this form of discounting a number of modifications to the set of axioms is required. A stronger essentiality condition should be used:
\begin{description}
\item [Axiom F2$^{\prime\prime}$.] (Essentiality of periods $1,\ldots, T$). 
There exist some $a, b \in X$ and some ${\bf x}\in X^n$ such that for every $t=1,\ldots, T$:
\[
(x_1, x_2, \ldots, x_{t-1}, a, x_{t+1}, \ldots, x_n) \succ (x_1, x_2, \ldots, x_{t-1}, b, x_{t+1}, \ldots, x_n).
\]
\end{description}

The impatience axiom, which is used to guarantee $\delta \in (0, 1)$, should be restated for the periods $T$ and $T+1$:

\begin{description}
\item [Axiom F6$^{\prime}$.] (Impatience). For every $a, b \in X$ if $a\succ b$, then for every ${\bf x} \in X^n$:
\[
(x_1,\ldots, x_{T-1}, a, b, x_{T+2},\ldots,x_n)  \succ (x_1, \ldots, x_{T-1}, b, a, x_{T+2},\ldots,x_n).
\] 
\end{description}
The generalization requires relaxing the axiom of stationarity to stationarity from period $T$. 
\begin{description}
\item [Axiom F7$^{\prime}$.] (Stationarity from period T).
The preference
\[
(x_1, \ldots, x_{T-1}, a, x_{T+1},\ldots, x_n)\succcurlyeq (x_1, \ldots, x_{T-1}, a, y_{T+1},\ldots, y_n)
\]
holds if and only if 
\[
(x_1, \ldots, x_{T-1}, x_{T+1},\ldots, x_n, a)\succcurlyeq (x_1, \ldots, x_{T-1}, y_{T+1},\ldots, y_n, a)
\]
for every $a \in X$ and every ${\bf x} \in X^n$.
\end{description}
The addition of the early bias axiom is needed, so that present bias may arise between any periods $\{t, t+1\}$, where $t\leq T$.
\begin{description}
\item [Axiom F8.] (Early bias) For every $a, b, c, d \in X$ such that $a\succ c, b\prec d$, for all ${\bf x} \in X^n$ and every $t \leq T$ if 
\[
(x_1, \ldots, x_{t-1}, a, b, x_{t+2}, \ldots, x_n)\sim (x_1, \ldots, x_{t-1}, c, d, x_{t+2}, \ldots, x_n), \ \text{then}
\]
\end{description}
\[
(x_1, \ldots, x_{t-2}, a, b, x_{t+2}, \ldots, x_n, x_{t-1}) \succcurlyeq (x_1, \ldots, x_{t-2}, c, d, x_{t+2}, \ldots, x_n, x_{t-1}).
\]

The early bias axiom is also referred to as the extended present bias axiom.

\begin{theorem}[Semi-hyperbolic discounting]
\label{SHDF}
The preferences $\succcurlyeq$ on $X^n$ satisfy axioms F1, F2$^{\prime\prime}$, F3, F4, F5, F6$^{\prime}$, F7$^{\prime}$, F8 if and only if there exists a $SH(T)$ discounted expected utility representation for $\succcurlyeq$ on $X^n$. If $(u, \boldsymbol{\beta}, \delta)$ and $(u^{\prime}, \boldsymbol{\beta'}, \delta^{\prime})$ both provide $SH(T)$ discounted expected utility representations for $\succcurlyeq$ on $X^n$, then $u=Au^{\prime}+B$ for some $A>0$ and some $B$, and $\delta=\delta'$, $\boldsymbol{\beta}=\boldsymbol{\beta'}$.
\end{theorem}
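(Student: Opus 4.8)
The plan is to follow the scheme of the proof of Theorem~\ref{EDF}: first extract an AA representation from Theorem~\ref{SEU}, then use the remaining axioms to force the weight vector into $SH(T)$ shape. The implication ``representation $\Rightarrow$ axioms'' is a routine check using the linearity of $u$; the only points worth recording are that F6$^{\prime}$ holds because $w_T-w_{T+1}=\beta_1\cdots\beta_{T-1}\delta^{T-1}(1-\delta)>0$, that F7$^{\prime}$ holds because $w_{t+1}=\delta w_t$ for all $t\ge T$, and that F8 holds because $w_{t-1}w_{t+1}\ge w_t^{2}$ for $2\le t\le T$ --- which is exactly the restated hypothesis $\beta_1\le\cdots\le\beta_{T-1}\le 1$. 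Throughout I assume $n\ge T+1$, so that periods $T$ and $T+1$ genuinely exist and $\delta$ is identified; for $n\le T$ some of the parameters are not separately pinned down.

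For the converse, assume all eight axioms. Essentiality of period~$1$ together with monotonicity yields non-triviality, so Theorem~\ref{SEU} supplies an AA representation $(u,{\bf w})$ with $u$ linear and non-constant, unique up to the stated transformations. A null coordinate cannot be essential, so F2$^{\prime\prime}$ forces $w_t>0$ for $t\le T$; normalise $w_1=1$. Now apply F7$^{\prime}$ exactly as stationarity is applied in Theorem~\ref{EDF}: fix $x_0\in X$ and define on $X^{n-T}$ the orders $\succcurlyeq^{\prime}$ and $\succcurlyeq^{\prime\prime}$, where $\succcurlyeq^{\prime}$ embeds a length-$(n-T)$ block into periods $T+1,\dots,n$ after $T$ copies of $x_0$, while $\succcurlyeq^{\prime\prime}$ embeds it into periods $T,\dots,n-1$ after $T-1$ copies of $x_0$ with $x_0$ in period~$n$; these are represented by $\sum_j w_{T+j}u(\cdot)$ and $\sum_j w_{T-1+j}u(\cdot)$ respectively. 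By F7$^{\prime}$ they coincide, they satisfy F1--F5 (non-triviality because $w_T>0$ is the leading coefficient of $\succcurlyeq^{\prime\prime}$), and so the uniqueness clause of Theorem~\ref{SEU} gives $(w_{T+1},\dots,w_n)=\delta(w_T,\dots,w_{n-1})$ for some $\delta>0$. Hence $w_{t+1}=\delta w_t$ for all $t\ge T$, so every $w_t>0$; setting $\beta_t:=w_{t+1}/(\delta w_t)>0$ for $t=1,\dots,T-1$ and unwinding the recursions rewrites $\sum_t w_t u(x_t)$ as the claimed $SH(T)$ function.

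It remains to show $\delta\in(0,1)$ and $\beta_1\le\cdots\le\beta_{T-1}\le 1$. Impatience F6$^{\prime}$ applied to a pair $a\succ b$ yields, through the representation, $(w_T-w_{T+1})(u(a)-u(b))>0$; since $a\succ b$ forces $u(a)>u(b)$ (the induced order on $X$ is represented by $u$), we obtain $w_T>w_{T+1}$, i.e.\ $\delta\in(0,1)$. For the coefficients $\beta_t$, fix $t\in\{2,\dots,T\}$ and choose $a\succ c$ and $b\prec d$ satisfying $w_t(u(a)-u(c))=w_{t+1}(u(d)-u(b))$ --- possible because $u$ is non-constant and linear, so its range is an interval and any sufficiently small positive difference is attainable --- so that the antecedent of F8 holds at $t$; its consequent then reads $w_{t-1}(u(a)-u(c))\ge w_t(u(d)-u(b))$, and substituting the equality gives $w_{t-1}w_{t+1}\ge w_t^{2}$, i.e.\ $r_{t-1}\le r_t$ where $r_s:=w_{s+1}/w_s$. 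Since $r_s=\beta_s\delta$ for $s\le T-1$ and $r_T=\delta$, the chain $r_1\le\cdots\le r_{T-1}\le r_T$ is precisely $\beta_1\le\cdots\le\beta_{T-1}\le 1$, and $\beta_t\in(0,1]$ follows.

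For uniqueness, two $SH(T)$ representations induce AA representations whose weight vectors both have first coordinate~$1$, so the scaling constant in Theorem~\ref{SEU} equals~$1$ and the two weight vectors are equal; reading $\delta$ off $w_{T+1}/w_T$ and $\beta_t$ off $w_{t+1}/(\delta w_t)$ gives $\delta=\delta^{\prime}$ and $\boldsymbol{\beta}=\boldsymbol{\beta'}$, while $u=Au^{\prime}+B$ with $A>0$ is immediate from Theorem~\ref{SEU}. I expect the step needing the most care to be the application of F7$^{\prime}$: one must check that \emph{both} induced orders on $X^{n-T}$ satisfy all of F1--F5 before the uniqueness part of Theorem~\ref{SEU} can be invoked (the non-triviality of $\succcurlyeq^{\prime}$ being available only once it has been identified with $\succcurlyeq^{\prime\prime}$), after which the deductions from F8 and F6$^{\prime}$ are short computations.
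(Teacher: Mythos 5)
Your proposal is correct and follows essentially the same route as the paper's proof: AA representation via Theorem~\ref{SEU}, the two induced orders on $X^{n-T}$ and AA uniqueness to get $w_{t+1}=\delta w_t$ for $t\geq T$, then F8 to order the ratios $w_{t+1}/w_t$ and F6$^{\prime}$ to get $\delta\in(0,1)$. Your packaging of the early-bias step as the single log-convexity inequality $w_{t-1}w_{t+1}\geq w_t^{2}$ for $2\leq t\leq T$, and your explicit caveats about $n\geq T+1$ and about non-triviality of $\succcurlyeq^{\prime}$ being inherited from $\succcurlyeq^{\prime\prime}$, are minor tidy refinements of the paper's argument rather than a different approach.
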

\begin{proof}
It can be easily seen that the axioms are implied by the representation. Suppose that the axioms hold.
As for Theorem \ref{EDF}, the conditions of AA representation are satisfied, so it follows that $\succcurlyeq$ has an AA representation $({\bf w}, u)$. Define $\succcurlyeq'$ on $X^{n-T}$ as follows:
\[
(x_{1},\ldots, x_{n-T})\succcurlyeq' (y_{1},\ldots, y_{n-T}) \Leftrightarrow 
\]
\[
(x_0, \ldots, x_0, x_{1},\ldots, x_{n-T})\succcurlyeq (x_0, \ldots, x_0, y_{1},\ldots, y_{n-T}). 
\]
Then $\succcurlyeq'$ is represented by:
\[
U'({\bf x})=w_{T+1} u(x_{1})+\ldots+w_n u(x_{n-T}). 
\]
Next, define $\succcurlyeq''$ on $X^{n-T}$ as follows:
\[
(x_{1},\ldots, x_{n-T})\succcurlyeq'' (y_{1},\ldots, y_{n-T}) \Leftrightarrow 
\]
\[
(x_0, \ldots, x_0, x_{1},\ldots, x_{n-T}, x_0)\succcurlyeq (x_0, \ldots, x_0, y_{1},\ldots, y_{n-T}, x_0).
\]
Then $\succcurlyeq''$ is represented by:
\[
U''({\bf x})=w_T u(x_{1})+\ldots+w_{n-1} u(x_{n-T}).
\]
According to stationarity from period $T$, the preferences are equivalent \linebreak $\left(\succcurlyeq'\equiv \succcurlyeq''\right)$ with two different AA representations ($U'$ and $U''$).

Preference orders $\succcurlyeq'\equiv \succcurlyeq''$ satisfy the AA axioms on $X^{n-T}$. Recall that $w_t$ are unique up to a scale. Hence, as essentiality holds for all $t$ (which follows from Axiom F2$^{\prime}$ and Axiom F7$^{\prime}$), we have $w_{t+1}=\delta w_t$ for some $\delta>0$ and hence
\[
w_n=\delta w_{n-1}=\delta^2 w_{n-2}=\ldots= \delta^{n-t} w_t=\ldots=\delta^{n-T} w_T.
\]
Therefore, $w_t=\delta^{t-T}w_T$ for all $t\geq T+1$. We therefore obtain the following representation for $\succcurlyeq$:
\begin{equation*}
U({\bf x}) = w_1 u(x_1)+\ldots + w_{T-1} u(x_{T-1}) + w_T \sum_{t=T}^n \delta^{t-T} u(x_t).
\end{equation*}
Because of the essentiality of the first period and uniqueness of $u$ up to affine transformations:
\[
\hat U({\bf x}) = u(x_1)+\frac{w_2}{w_1}u(x_2)+\ldots+\frac{w_{T-1}}{w_1}u(x_{T-1})+ \frac{w_T}{w_1} \sum_{t=T}^n \delta^{t-T} u(x_t). 
\]
Note that 
\begin{align*}
&\frac{w_3}{w_1}=\frac{w_3}{w_2} \cdot \frac{w_2}{w_1},\\
&\cdots, \\
& \frac{w_T}{w_1}=\frac{w_T}{w_{T-1}}\cdot \frac{w_{T-1}}{w_{T-2}} \cdot \ldots \cdot \frac{w_2}{w_1}. 
\end{align*}

Let $\gamma_{t-1}=\frac{w_t}{w_{t-1}}$ for all $t\leq T$. Therefore,
\begin{align*}
&\frac{w_2}{w_1}=\gamma_1, \\
&\frac{w_3}{w_1}=\gamma_1 \gamma_2, \\
&\cdots, \\
& \frac{w_T}{w_1}=\gamma_1 \gamma_2 \ldots \gamma_{T-1}. 
\end{align*}
With this notation: 
\[
\hat U({\bf x}) = u(x_1)+\gamma_1 u(x_2)+\ldots + \gamma_1\cdots\gamma_{T-2} u(x_{T-1}) + \gamma_1\cdots\gamma_{T-1}\sum_{t = T}^n \delta^{t-T} u(x_t).
\]
It is necessary to show that $\gamma_{t-1}=\beta_{t-1}\delta$  with $\beta_{t-1}\in (0,1]$ for all $t\leq T$. 

Suppose that $t=T$. Choose $a, b, c, d \in X$ such that $u(b)<u(d)$, \linebreak $u(a)>u(c)$ and
\begin{equation}\label{eq:pb}
\gamma_1\cdots\gamma_{T-1} u(a)+\gamma_1\cdots\gamma_{T-1}\delta u(b)=\gamma_1\cdots\gamma_{T-1} u(c)+\gamma_1\cdots\gamma_{T-1}\delta u(d).
\end{equation}
Since essentiality is satisfied for each period we can rearrange the equation \eqref{eq:pb}:
\begin{equation}
\label{eq:pb1}
\delta= \frac{u(a)-u(c)}{u(d)-u(b)}.
\end{equation}
From \eqref{eq:pb} it also follows that 
\[
(x_1, \ldots, x_{T-1}, a, b, x_{T+2}, \ldots, x_n)\sim (x_1, \ldots, x_{T-1}, c, d, x_{T+2}, \ldots, x_n), 
\]
Therefore, by the early bias axiom:
\[
(x_1, \ldots, x_{T-2}, a, b, x_{T+2}, \ldots, x_n, x_{T-1}) \succcurlyeq (x_1, \ldots, x_{T-2}, c, d, x_{T+2}, \ldots, x_n, x_{T-1}).
\]
Thus we obtain:
\[
\gamma_1\cdots\gamma_{T-2} u(a)+\gamma_1\cdots\gamma_{T-1} u(b) \geq \gamma_1\cdots\gamma_{T-2} u(c)+\gamma_1\cdots\gamma_{T-1} u(d).
\]
Since the essentiality condition is satisfied for each period we can rearrange this inequality:
\begin{equation}
\label{eq:pb2}
\gamma_{T-1} \leq  \frac{u(a)-u(c)}{u(d)-u(b)}.
\end{equation}
Comparing \eqref{eq:pb1} to \eqref{eq:pb2} we conclude that $\delta \geq \gamma_{T-1}$, therefore, $\gamma_{T-1}=\beta_{T-1} \delta$, where $\beta_{T-1}\in (0, 1]$. 

Analogously, suppose that $t=T-1$. Choose $a', b', c', d' \in X$ such that $u(b')<u(d')$ and $u(a')>u(c')$. Using present bias axiom and essentiality of each period we obtain  
\begin{equation}
\label{eq:pb3}
\gamma_{T-1}=\frac{u(a')-u(c')}{u(d')-u(b')},
\end{equation}
and
\begin{equation}
\label{eq:pb4}
\gamma_{T-2} \leq \frac{u(a')-u(c')}{u(d')-u(b')}.
\end{equation}
It follows from  \eqref{eq:pb3} and  \eqref{eq:pb4} that $\gamma_{T-2} \leq \gamma_{T-1}$. Therefore, $\gamma_{T-2} = \beta'_{T-2} \gamma_{T-1}$, where $\beta'_{T-2}\in (0, 1]$. Recall that $\gamma_{T-1}=\beta_{T-1} \delta$. Hence, 
\[
\gamma_{T-2} = \beta'_{T-2} \beta_{T-1} \delta =\beta_{T-2} \delta, 
\]
where $\beta_{T-2} = \beta'_{T-2} \beta_{T-1}$ and $\beta_{T-2} \in (0,1]$ as both $\beta'_{T-2}\in (0, 1]$ and $\beta_{T-1}\in (0, 1]$. Note also that $\beta_{T-2}\leq \beta_{T-1}$.

Using the early bias axiom repeatedly for $t<T-1$ we obtain $\gamma_{t-1}=\beta_{t-1}\delta$  with $\beta_{t-1}\in (0,1]$ for all $t\leq T$ and $\beta_1 \leq \beta_2 \leq \ldots \leq \beta_{T-1}$. Hence, 
\begin{align*}
\hat U({\bf x}) = u(x_1)+\beta_1 \delta u(x_2)+\beta_1 \beta_2 \delta^2 u(x_3)+\ldots+\beta_1 \beta_2\cdots \beta_{T-2} \delta^{T-2}u(x_{T-1}) \\+\beta_1 \beta_2\cdots \beta_{T-1}\sum _{t=T}^n \delta^{t-1}u(x_t).
\end{align*}

To show that $\delta \in (0,1)$ the impatience axiom should be applied. For every $a, b \in X$ if $a\succ b$, then for every ${\bf x} \in X^n$
\[
(x_1,\ldots, x_{T-1}, a, b, x_{T+2},\ldots,x_n)  \succ (x_1, \ldots, x_{T-1}, b, a, x_{T+2},\ldots,x_n).
\] 
Then
\[
\beta_1\cdots \beta_{T-1} \delta^{T-1} u(a) + \beta_1\cdots \beta_{T-1} \delta^T u(b) > \beta_1\cdots \beta_{T-1} \delta^{T-1} u(b) + \beta_1\cdots \beta_{T-1} \delta^T u(a).
\]
Therefore, due to essentiality of each period:
\[
(1-\delta)(u(a) - u(b)) > 0.
\]
Hence, $\delta \in (0, 1)$.

Suppose that $(u, \boldsymbol{\beta}, \delta)$ and $(u^{\prime}, \boldsymbol{\beta'}, \delta^{\prime})$ both provide $SH(T)$ discounted expected utility representations for $\succcurlyeq$ on $X^n$.  Let $D(t)$ and $D^{\prime}(t)$ be semi-hyperbolic discount functions for given $\boldsymbol{\beta}, \delta$ and $\boldsymbol{\beta^{\prime}}, \delta^{\prime}$, respectively. Since $(u, \boldsymbol{\beta}, \delta)$ and $(u^{\prime}, \boldsymbol{\beta'}, \delta^{\prime})$ both provide AA representations for $\succcurlyeq$ it follows that  $u=Au^{\prime}+B$ for some $A>0$ and some $B$, and there is some $C>0$ such that $D(t)=C\cdot D^{\prime}(t)$ for all $t$. Taking $t=1$ we obtain $C=1$, and hence, letting $t=2, 3, \ldots, T$ we get $\beta_t \delta = \beta^{\prime}_t \delta^{\prime}$ for all $t\leq T$.
Finally, letting $t=T+1$ we conclude that $\delta = \delta^{\prime}$. Therefore, $\boldsymbol{\beta}=\boldsymbol{\beta^{\prime}}$.
\end{proof}

\section{Discounted utility: infinite case ($n=\infty$)}
\label{sec:7}
\subsection{Exponential discounting}
\label{sec:8}
Based on the AA representation for the preferences over infinite consumption streams (Theorem \ref{ISEU}), with some strengthening of non-triviality (Axiom I2) and the addition of a suitable stationarity axiom, discounting functions in an exponential form can be obtained. The impatience axiom is not needed since convergence (Axiom I6) plays its role.

\begin{description}
\item [Axiom I2$^\prime$.] (Essentiality of period 1). There exist some $a, b \in X$ such that $[a]_1 \succ x_0 \succ [b]_1$. 
\end{description}

\begin{description}
\item [Axiom I7.] (Stationarity). The preference $(a, x_1, x_2, \ldots) \succcurlyeq (a, y_1, y_2, \ldots)$ holds if and only if $ (x_1, x_2, \ldots)  \succcurlyeq (y_1, y_2, \ldots)$ for every $a \in X$ and every ${\bf x}, {\bf y} \in X^\infty$.
\end{description}
\begin{theorem}[Exponential discounting]
\label{EDI}
The preferences $\succcurlyeq$ on $X^\infty$ satisfy axioms I1, I2$^\prime$, I3-I7 if and only if there exists an exponentially discounted expected utility representation for $\succcurlyeq$ on $X^\infty$. If $(u, {\bf \delta})$ and $(u^{\prime}, {\bf \delta^{\prime}})$ both provide exponentially discounted expected utility representations for $\succcurlyeq$ on $X^\infty$, then $u=Au^{\prime}+B$ for some $A>0$ and some $B$, and $\delta=\delta^{\prime}$.
\end{theorem}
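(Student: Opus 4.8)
The plan is to follow the template of Theorem~\ref{EDF}, using Theorem~\ref{ISEU} in place of Theorem~\ref{SEU} and letting the convergence axiom I6 take over the role that impatience played in the finite case. The direction ``representation $\Rightarrow$ axioms'' is routine: for $U(\mathbf x)=\sum_{t=1}^\infty\delta^{t-1}u(x_t)$ with $\delta\in(0,1)$ and $u$ linear and non-constant, axioms I1, I3, I5 and I7 are immediate, I2$^\prime$ simply records the standing requirement that $x_0$ is interior with respect to $\succcurlyeq$, I4 follows from linearity of $u$ and continuity of the partial-sum maps, and I6 follows from the observation that $U(\mathbf x^+_{k,T})\to U(\mathbf x)+\delta^{k-1}\bigl(u(x^+)-u(x_k)\bigr)$ as $T\to\infty$ (the perturbed tail converges to the tail of $\mathbf x$ and the trailing block of $x_0$'s contributes a vanishing remainder), so $[x^+]_k\succ[x_k]_k$ forces $\mathbf x\preccurlyeq\mathbf x^+_{k,T}$ for all large $T$, and symmetrically for $x^-$.

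For the converse, assume I1, I2$^\prime$, I3--I7. First I would deduce Axiom I2 from I2$^\prime$ and monotonicity: since $[a]_1\succ x_0\succ[b]_1$, Axiom I5 rules out both $x_0\succcurlyeq a$ and $b\succcurlyeq x_0$, whence $a\succ x_0\succ b$ in the induced order on $X$. Theorem~\ref{ISEU} then supplies an AA representation $(u,\mathbf w)$ of $\succcurlyeq$; essentiality of period~1 gives $w_1>0$, and I normalise $w_1=1$. The key step exploits stationarity exactly as in the finite proof: define $\succcurlyeq^{\prime}$ on $X^\infty$ by $\mathbf x\succcurlyeq^{\prime}\mathbf y \iff (x_0,x_1,x_2,\dots)\succcurlyeq(x_0,y_1,y_2,\dots)$. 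By I7, $\succcurlyeq^{\prime}$ coincides with $\succcurlyeq$, while cancelling the common leading term $w_1u(x_0)$ shows $\succcurlyeq^{\prime}$ is represented by $\mathbf x\mapsto\sum_{t\ge1}w_{t+1}u(x_t)$ (this shifted series converges because $\sum_{t\ge1}w_tu(z_t)$ converges for $\mathbf z=(x_0,x_1,x_2,\dots)$). Non-triviality of $\succcurlyeq$ keeps $(w_2,w_3,\dots)$ from being the zero vector, so $(u,(w_2,w_3,\dots))$ is a bona fide AA representation of $\succcurlyeq$, and the uniqueness clause of Theorem~\ref{ISEU} gives $w_t=Cw_{t+1}$ for all $t$ and some $C>0$; setting $\delta:=1/C$ yields $w_t=\delta^{t-1}$ and hence $U(\mathbf x)=\sum_{t=1}^\infty\delta^{t-1}u(x_t)$ with $\delta>0$.

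It remains to pin down $\delta\in(0,1)$ and to establish uniqueness. For $\delta<1$: since $U$ is a real-valued representation on all of $X^\infty$, the series $\sum_{t=1}^\infty\delta^{t-1}u(a)$ must converge for every constant stream $(a,a,\dots)$; if $\delta\ge1$ its general term fails to tend to $0$ unless $u(a)=0$, so $\delta\ge1$ would force $u\equiv0$, contradicting that $u$ is non-constant. Hence $\delta\in(0,1)$. For uniqueness, if $(u,\delta)$ and $(u',\delta')$ both provide exponentially discounted expected utility representations, then both are AA representations, so Theorem~\ref{ISEU} gives $u=Au'+B$ with $A>0$ and $\delta^{t-1}=C(\delta')^{t-1}$ for all $t$; taking $t=1$ forces $C=1$, and then $t=2$ forces $\delta=\delta'$.

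The main obstacle I anticipate is the derivation of $\delta<1$ without an impatience axiom: one must argue carefully that Theorem~\ref{ISEU}, through the convergence axiom I6, delivers a representation whose defining series is a genuine real number for \emph{every} stream, and it is precisely this convergence---tested on constant streams---that rules out $\delta\ge1$. A secondary point needing care is verifying, before the uniqueness clause of Theorem~\ref{ISEU} is invoked, that the shifted weight vector $(w_2,w_3,\dots)$ is an admissible AA weight vector, i.e.\ non-negative and not identically zero.
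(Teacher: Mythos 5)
Your proposal is correct and follows essentially the same route as the paper: derive the AA representation from Theorem \ref{ISEU}, use stationarity (via the shift and the uniqueness clause) to get $w_t=\delta^{t-1}$ with $\delta>0$, and then obtain $\delta<1$ from convergence of $U$ on a constant stream with $u(a)\neq 0$ rather than from an impatience axiom, with uniqueness handled exactly as in the finite case. The details you supply (deducing I2 from I2$^\prime$ and monotonicity, checking that the shifted weight vector is a legitimate AA weight vector) are exactly the steps the paper leaves implicit when it says the argument ``follows the steps of the proof of Theorem \ref{EDF} with $n=\infty$.''
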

\begin{proof}
The necessity of the axioms is straightforward.
The proof of sufficiency follows the steps of the proof of Theorem \ref{EDF} with $n=\infty$.
Applying Theorem \ref{ISEU} to the preferences satisfying the stationarity axiom we obtain the representation:
\[
U({\bf x}) = \sum_{t=1}^\infty\delta^{t-1}u(x_t), 
\]
where $\delta>0$ and ${\bf x} \in X^\infty$.

Next, instead of using the impatience axiom as it is done in the finite case, the convergence axiom is applied. 
Take a constant stream ${\bf a}=(a, a, \ldots)$, such that $u(a) \neq 0$. Then,
\[
U({\bf a}) = \sum_{t=1}^\infty\delta^{t-1}u(a) = u(a) \sum_{t=1}^\infty\delta^{t-1}, 
\]
We know that $U({\bf a})$ should converge by Theorem \ref{ISEU}. It follows that $\delta <1$.

The proof of the uniqueness claims is analogous to Theorem \ref{EDF}.
\end{proof}

\subsection{Semi-hyperbolic discounting}

The extension of semi-hyperbolic discounting to the case where $n=\infty$ is easily obtained.
\begin{description}
\item [Axiom I2$^{\prime\prime}$.] (Essentiality of periods $1,\ldots, T$). For some $a, b \in X$ we have $[a]_t\succ x_0 \succ [b]_t$ for every $t=1,\ldots, T$.
\end{description}
The generalization requires relaxing the axiom of stationarity to stationarity from period $T$. 
\begin{description}
\item [Axiom I7$^{\prime}$.] (Stationarity from period T).
The preference
\[
(x_1, \ldots, x_{T-1}, a, x_{T+1},\ldots)\succcurlyeq (x_1, \ldots, x_{T-1}, a, y_{T+1},\ldots)
\] 
holds if and only if 
\[
(x_1, \ldots, x_{T-1}, x_{T+1},\ldots)\succcurlyeq (x_1, \ldots, x_{T-1}, y_{T+1},\ldots)
\]
for every $a \in X$, and every ${\bf x} \in X^{\infty}$.
\end{description}
As in the finite case the addition of the early bias axiom allows present bias between $\{t, t+1\}$, where $t\leq T$.
\begin{description}
\item [Axiom I8.] (Early bias) For every $a, b, c, d \in X$ such that $a\succ c, b\prec d$, and for all ${\bf x} \in X^{\infty}$ and every $t \leq T$ 
\[
\ \text{if} \ (x_1, \ldots, x_{t-1}, a, b, x_{t+2}, \ldots)\sim (x_1, \ldots, x_{t-1}, c, d, x_{t+2}, \ldots), \ \text{then}
\]
\[
(x_1, \ldots, x_{t-2}, a, b, x_{t+2}, \ldots) \succcurlyeq (x_1, \ldots, x_{t-2}, c, d, x_{t+2}, \ldots).
\]
\end{description}

\begin{theorem}[Semi-hyperbolic discounting]
The preferences $\succcurlyeq$ on $X^{\infty}$ satisfy axioms I1, I2$^{\prime\prime}$, I3-I6, I7$^{\prime}$, I8 if and only if there exists a
$SH(T)$ discounted expected utility representation for $\succcurlyeq$ on $X^n$. If $(u, \boldsymbol{\beta}, \delta)$ and $(u^{\prime}, \boldsymbol{\beta'}, \delta^{\prime})$ both provide $SH(T)$ discounted expected utility representations for $\succcurlyeq$ on $X^n$, then $u=Au^{\prime}+B$ for some $A>0$ and some $B$, and $\delta=\delta'$, $\boldsymbol{\beta}=\boldsymbol{\beta'}$.
\end{theorem}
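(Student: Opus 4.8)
The plan is to run the argument of Theorem~\ref{SHDF} (the finite semi-hyperbolic case) verbatim wherever possible, with Theorem~\ref{ISEU} in place of Theorem~\ref{SEU} and with the convergence axiom~I6 playing the role that the impatience axiom~F6$^{\prime}$ played in the finite case --- precisely the modification that takes one from Theorem~\ref{EDF} to Theorem~\ref{EDI}. Necessity of the axioms from the representation is routine, so I would concentrate on sufficiency.

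Assuming I1, I2$^{\prime\prime}$, I3--I6, I7$^{\prime}$, I8, I would first note that non-triviality (Axiom~I2) follows from essentiality of periods $1,\dots,T$ (Axiom~I2$^{\prime\prime}$) together with monotonicity, so Theorem~\ref{ISEU} supplies an AA representation $(u,\mathbf{w})$ with $u$ linear and non-constant, $w_t\ge 0$, and the series $\sum_t w_t u(x_t)$ convergent on all of $X^\infty$. To extract a geometric tail I would imitate the finite proof: define $\succcurlyeq'$ on $X^\infty$ by prepending $T$ copies of $x_0$ and $\succcurlyeq''$ by prepending $T-1$ copies of $x_0$; instantiating Axiom~I7$^{\prime}$ with its first $T-1$ coordinates and its period-$T$ coordinate all set to $x_0$ gives $\succcurlyeq'\equiv\succcurlyeq''$. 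Then I would verify that $\succcurlyeq'$ inherits Axioms~I1--I6 on $X^\infty$ (prepending a fixed ultimately constant head maps $X^{**}$ into $X^{**}$ and respects mixtures and the convergence condition) and that essentiality of period~$T$, i.e.\ $w_T>0$, propagates one coordinate at a time through $\succcurlyeq'\equiv\succcurlyeq''$ to give $w_t>0$ for every $t\ge T$. Since $\succcurlyeq'$ is then represented both by $\mathbf{z}\mapsto\sum_i w_{T+i}u(z_i)$ and by $\mathbf{z}\mapsto\sum_i w_{T-1+i}u(z_i)$, the uniqueness-up-to-a-positive-scalar clause of Theorem~\ref{ISEU} yields a $\delta>0$ with $w_{t+1}=\delta w_t$ for all $t\ge T$, hence $w_t=\delta^{t-T}w_T$. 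Normalising $w_1=1$ (valid since period~1 is essential) then gives
\[
U(\mathbf{x})=w_1u(x_1)+\dots+w_{T-1}u(x_{T-1})+w_T\sum_{t=T}^{\infty}\delta^{t-T}u(x_t).
\]

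From here the argument is a transcription of Theorem~\ref{SHDF}. Setting $\gamma_{t-1}=w_t/w_{t-1}$ for $t\le T$ (positive, by essentiality of periods $1,\dots,T$) and applying the early-bias axiom~I8 repeatedly --- each time choosing $a,b,c,d\in X$ with $u(a)>u(c)$ and $u(b)<u(d)$ realising the relevant indifference, which is possible because $u$ is non-constant and linear on a mixture set --- forces $\gamma_{T-1}\le\delta$ and $\gamma_{t-1}\le\gamma_t$ for each $t<T$, whence $\gamma_{t-1}=\beta_{t-1}\delta$ with $\beta_{t-1}\in(0,1]$ and $\beta_1\le\dots\le\beta_{T-1}$. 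To obtain $\delta<1$ I would copy the device of Theorem~\ref{EDI}: evaluate $U$ on a constant stream $(a,a,\dots)$ with $u(a)\ne 0$; its tail contributes $w_T u(a)\sum_{t\ge T}\delta^{t-T}$, which must be finite by Theorem~\ref{ISEU} while $w_T>0$, so $\delta<1$. This yields the $SH(T)$ discounted expected utility representation. For uniqueness, two such representations are in particular two AA representations, so Theorem~\ref{ISEU} gives $u=Au'+B$ with $A>0$ and $D(t)=C\,D'(t)$ for all $t$ with $C>0$; $t=1$ forces $C=1$, then $t=2,\dots,T$ give $\beta_t\delta=\beta_t'\delta'$ and $t=T+1$ gives $\delta=\delta'$, whence $\boldsymbol{\beta}=\boldsymbol{\beta'}$.

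The step I expect to be the main obstacle is establishing the geometric tail: confirming that the prepended/shifted order $\succcurlyeq'$ genuinely satisfies all of Axioms~I1--I6 --- especially that the convergence axiom~I6 survives prepending a fixed head --- since only then is the uniqueness clause of Theorem~\ref{ISEU} available to force the tail of $\mathbf{w}$ to be geometric; together with the small but real point that $\delta$ is \emph{strictly} positive, which is kept away from zero precisely by the essentiality of period $T+1$ that the propagation argument supplies. Everything else should be a routine adaptation of the proofs of Theorems~\ref{SHDF} and~\ref{EDI}.
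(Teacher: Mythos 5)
Your proposal is correct and follows essentially the same route as the paper's own proof, which simply transcribes the finite-case argument of Theorem~\ref{SHDF} using Theorem~\ref{ISEU} for the AA representation and the convergence of $\hat U$ on a constant stream with $u(a)\neq 0$ (as in Theorem~\ref{EDI}) to obtain $\delta<1$. The one step you flag as a potential obstacle is in fact lighter than you fear: the uniqueness clause of Theorem~\ref{ISEU} applies to any order admitting two AA representations, so you need only that $\succcurlyeq'\equiv\succcurlyeq''$ is represented by both shifted weight vectors (with some positive weight, via essentiality), not a full re-verification of Axioms I1--I6 for the shifted order.
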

\begin{proof} The necessity of the axioms is obviously implied by the representation.
The proof of sufficiency is analogous to the finite case. Applying Theorem \ref{ISEU} and stationarity from period $T$ we get the representation:
\begin{equation*}
U({\bf x}) = w_1 u(x_1)+\ldots + w_{T-1} u(x_{T-1}) + w_T \sum_{t=T}^{\infty} \delta^{t-T} u(x_t).
\end{equation*}

Next, dividing by $w_1>0$ and introducing the notation $\frac{w_{t}}{w_{t-1}}=\gamma_{t-1}>0$, where $t\leq T$, the representation becomes
\[
\hat U({\bf x}) = u(x_1)+\gamma_1 u(x_2)+\ldots + \gamma_1\cdots\gamma_{T-2} u(x_{T-1}) + \gamma_1\cdots\gamma_{T-1} \sum_{t = T}^{\infty} \delta^{t-T} u(x_t).
\]

Using essentilaity of each period and the early bias axiom repeatedly, we demonstrate that $\gamma_{t-1}=\beta_{t-1}\delta$ with $\beta_{t-1}\in (0,1]$ for all $t\leq T$ and $\beta_1 \leq \beta_2 \leq \ldots \leq \beta_{T-1}$.
Therefore,
\begin{align*}
\hat U({\bf x}) = u(x_1)+\beta_1 \delta u(x_2)+\beta_1 \beta_2 \delta^2 u(x_3)+\ldots+\beta_1 \beta_2\cdots \beta_{T-2} \delta^{T-2}u(x_{T-1}) \\ +\beta_1 \beta_2\cdots \beta_{T-1}\sum _{t=T}^{\infty} \delta^{t-1}u(x_t).
\end{align*}

Finally, to show that $\delta \in (0, 1)$, take a constant stream ${\bf a}=(a, a, \ldots)$, such that $u(a)\neq 0$.
Then,
\begin{align*}
\hat{U}({\bf a}) =& u(a)+\beta_1 \delta u(a)+\ldots+\beta_1\cdots\beta_{T-2}\delta^{T-2} u(a) + \beta_1\cdots\beta_{T-1} \sum_{t = T}^{\infty} \delta^{t-1} u(a)\\
=& u(a) \left (1+\beta_1 \delta+\ldots+\beta_1\cdots\beta_{T-2}\delta^{T-2} + \beta_1\cdots\beta_{T-1} \sum_{t = T}^{\infty} \delta^{t-1} \right).
\end{align*}
From Theorem 2 it follows that $\hat{U}({\bf a})$ converges, therefore, $\delta<1$.

The proof of the uniqueness claims is analogous to Theorem \ref{SHDF}.
\end{proof}

\section{Discussion}
\label{sec:10}
A number of axiomatizations of exponential and quasi-hyperbolic discounting have been suggested by different authors. 
In fact, all the axiomatizations use different assumptions and there is no straightforward transformation from one type of discounting to another. In this paper we provided an alternative approach to get a time separable discounted utility representation, showing that Anscombe and Aumann's result can be exploited as a common background for axiomatizing exponential and quasi-hyperbolic discounting in both finite and infinite time horizons. In addition, we demonstrated that the axiomatization of quasi-hyperbolic discounting can be easily extended to $SH(T)$.

A key distinguishing feature of our set-up is the mixture set structure for $X$ and the use of the mixture independence condition. An essential question, however, is whether mixture independence is normatively compelling in a time preference context, because states are mutually exclusive whereas time periods are not. It is worth mentioning that the temporal interpretation of the AA framework was also used by \citet{wakai2008model} to axiomatize an entirely different class of preferences, which exhibit a desire to spread bad and good outcomes evenly over time.

Commonly, the condition of joint independence is used to establish additive separability in time-preference models.
Given $A \subseteq T$, where $T=\{1, \ldots, n\}$, and ${\bf x}, {\bf y} \in X^n$, define ${\bf x}_{A}{\bf y}$ as follows: $(x_{A}y)_t$ is $x_t$ if $t\in A$ and $y_t$ otherwise.
The preference order $\succcurlyeq$ satisfies {\bf joint independence} if for every $A\subseteq T$ and for every ${\bf x}, {\bf x'}, {\bf y}, {\bf y'} \in X^n$: 
\begin{eqnarray*}
\label{eq:JI}
{\bf x}_A{\bf y} \succcurlyeq {\bf x'}_A{\bf y} \text{ if and only if } {\bf x}_A{\bf y'} \succcurlyeq {\bf x'}_A{\bf y'}.
\end{eqnarray*}

Joint independence is used to obtain an additively separable representation by \citet{debreu1960topological}, so we will sometimes refer it as a Debreu-type independence condition. 
It is known that mixture independence implies joint independence \citep{grant2009zandt}, but whether joint independence (with some other plausible conditions) implies mixture independence is yet to be determined.   

In fact, we are not the first to use a mixture-type independence condition in the context of time preferences. \citet{wakai2008model} also does so, though he uses the weaker form of \emph{constant independence} introduced by \citet{gilboa1989maxmin}. 

A version of the mixture independence condition can also be formulated in a Savage environment \citep{savage1954foundations} without objective probabilities, as discussed in \citet{gul1992savages}. \citet{olea2014axiomatization} use precisely this version of mixture independence in one of their axiomatizations of quasi-hyperbolic discounting. For every $x, y \in X$ let us write $(x, y)$ for $(x, y, y, \ldots)\in X^\infty$. Let $m(x_1,y_1)$ denote some $c \in X$ satisfying $(x_1, y_1) \sim (c, c)$. For any streams $(x_1, x_2)$ and $(z_1, z_2)$ the consumption stream $(m(x_1,z_1), m(x_2,z_2))$ is called a subjective mixture of $(x_1, x_2)$ and $(z_1, z_2)$. Olea and Strzalecki's version of the mixture independence axiom (their Axiom I2) is as follows:
for every $x_1, x_2, y_1, y_2, z_1, z_2 \in X$ if 
$(x_1, x_2) \succcurlyeq (y_1, y_2)$, then 
\[
(m(x_1,z_1), m(x_2,z_2)) \succcurlyeq (m(y_1,z_1), m(y_2,z_2))  
\]
and
\[
(m(z_1,x_1), m(z_2,x_2)) \succcurlyeq (m(z_1,y_1), m(z_2,y_2)).
\]
In other words, if a consumption stream $(x_1, x_2)$ is preferred to a stream $(y_1, y_2)$, then subjectively mixing each stream with $(z_1, z_2)$ does not affect the preference.

In their axiomatization of quasi-hyperbolic discounting Olea and Strzalecki invoke their mixture independence condition (Axiom 12) as well as Debreu-type independence conditions. The latter are used to obtain a representation in the form
\[
{\bf x}\succcurlyeq {\bf y} \text{ if and only if }  u(x_1)+\sum_{t=2}^{\infty} \delta^{t-1}v(x_t) \geq u(y_1)+\sum_{t=2}^{\infty} \delta^{t-1}v(y_t),
\]
\and then their Axiom 12 is used to ensure $v=\beta u$.\footnote{As pointed out above, mixture independence stated for $n$ periods implies joint independence for $n$ periods. Hence, this raises the obvious question of whether it is possible to use an n-period version of the subjective mixture independence axiom to obtain a time separable discounted utility representation without the need for the Debreu-type independence conditions.} 

\citet{hayashi2003quasi} and \citet{epstein1983stationary} considered preferences over lotteries over consumption streams. 
In their framework $X^\infty$ is the set of non-stochastic consumption streams, where $X$ is required to be a compact connected separable metric space. Denote the set of probability measures on Borel $\sigma$-algebra defined on $X^\infty$ as $\Delta(X^\infty)$. It is useful to note that our setting is the restriction of the \citeauthor{hayashi2003quasi} and \citeauthor{epstein1983stationary} set-up to product measures, i.e., to $\Delta(X)^\infty \subset \Delta(X^\infty)$.
The axiomatization systems by \citeauthor{hayashi2003quasi} and \citeauthor{epstein1983stationary} are based on the assumptions of expected utility theory. The existence of a continuous and bounded vNM utility index $U \colon \Delta(X^\infty) \to \mathbb{R}$ is stated as one of the axioms. A set of necessary and sufficient conditions for this is provided by \citet{grandmont1972continuity}, and includes the usual vNM independence condition on $\Delta(X^\infty)$:
for every $ {\bf x}, {\bf y}, {\bf z} \in \Delta(X^\infty)$ and any $\alpha \in [0, 1]$,
${\bf x} \sim {\bf y}$ implies $\alpha {\bf x} + (1-\alpha) {\bf z} \sim \alpha {\bf y} + (1-\alpha) {\bf z}$.

Obviously, this independence condition is not strong enough to deliver joint independence of time periods, which is why additional assumptions of separability are needed. Two further Debreu-type independence conditions are required for exponential discounting:
\begin{itemize} \renewcommand*\labelitemi{\textbullet}
\item independence of stochastic outcomes in periods $\{1, 2\}$ from deterministic outcomes in $\{3, 4, \ldots \}$,
\item independence of stochastic outcomes in periods $\{2, 3, \ldots\}$ from deterministic outcomes in period $\{1\}$.
\end{itemize}

To obtain quasi-hyperbolic discounting two additional Debreu-type independence conditions should be satisfied:
\begin{itemize} \renewcommand*\labelitemi{\textbullet}
\item independence of stochastic outcomes in periods $\{2, 3\}$ from deterministic outcomes in periods $\{1\}$ and $\{4, \ldots \}$,  
\item independence of stochastic outcomes in periods $\{3, 4, \ldots\}$ from deterministic outcomes in periods $\{1, 2\}$.
\end{itemize}

It is easy to see that these axioms applied to the non-stochastic consumption streams are analogous to the Debreu-type independence conditions used in \citet{bleichrodt2008koopmans} and \citet{olea2014axiomatization}.

In summary, to get a discounted utility representation with the discount function in either exponential and quasi-hyperbolic form separability must be assumed. The mixture independence axiom appears to be a strong assumption, however, it gives the desired separability without the need for additional Debreu-type independence conditions.

\section{Appendix: Proof of Theorem \ref{ISEU}}
\label{sec:11}
\begin{proof}
Necessity of the axioms is straightforward to verify. Therefore we will focus on the proof of sufficiency.
\paragraph{Step 1.}  Applying Theorem 1 of \citet{fishburn1982foundations} to the mixture set $X$, it follows from Axioms I1, I3, I4 that there exists a linear utility function $u$ preserving the order on $X$ (unique up to positive affine transformations). Normalize $u$ so that $u(x_0)=0$. Note that by non-triviality $u(x_0)$ is in the interior of the non-degenerate interval $u(X)$.

Convert streams into their utility vectors by replacing the outcomes in each period by their utility values. Define the following order:
$(v_1, v_2, \ldots) \succcurlyeq^* (u_1, u_2, \ldots) \Leftrightarrow$ there exist ${\bf x, y} \in X^{\infty}$ such that ${\bf x} \succcurlyeq {\bf y}$ and $u(x_t)=v_t$ and $u(y_t)=u_t$ for every $t$. This order is unambiguously defined because of the monotonicity assumption, i.e., if $x_i \sim x^{\prime}_i$ then $(x_1, \ldots, x_i, \ldots) \sim (x_1, \ldots, x^{\prime}_i, \ldots)$.

The preference order $\succcurlyeq^*$ inherits the properties of weak order, mixture independence and mixture continuity from $\succcurlyeq$. Note that $u(X)^{\infty}$ is a mixture set under the standard operation of taking convex combinations:
if ${\bf v, u} \in u(X)^{\infty}$ then 
\[
{\bf v}\lambda {\bf u}=\lambda {\bf v} + (1-\lambda) {\bf u} \ \text{for every} \ \lambda \in (0,1). 
\]
Therefore, by Theorem 1 of \citet{fishburn1982foundations} we obtain a linear representation $U \colon$ $u(X)^{\infty} \to \mathbb{R}$, where $U$ is unique up to positive affine transformations. 

Hence ${\bf v} \succcurlyeq^* {\bf u}$ if and only if $U({\bf v}) \geq U({\bf u})$. 

\paragraph{Step 2.} Normalize $U$ so that $U(0, 0, \ldots)=U({\bf 0})=0$. Since $0$ is in the interior of $u(X)$, and since $U({\bf v} \lambda {\bf 0})=\lambda U({\bf v})$ for any ${\bf v}\in \mathbb{R}^\infty$ and for every $\lambda \in (0, 1)$, we can assume that $U$ is defined on $\mathbb{R}^{\infty}$.

Mixture linearity of $U$ implies standard linearity of $U$ on $\mathbb{R}^{\infty}$. To prove this, we need to show that $U(k {\bf v})= kU({\bf v})$ for any $k$ and $U({\bf v}+{\bf u})=U({\bf v})+U({\bf u})$ for any ${\bf u, v} \in \mathbb{R}^{\infty}$. 

As $u(X)^{\infty}$ is a mixture set under the operation of taking convex combinations, $U({\bf v}k{\bf 0})=U(k{\bf v}+(1-k){\bf 0})=U(k {\bf v})=k U({\bf v})$ for any $k \in (0, 1)$.
If $k>1$ then $U({\bf v})=U(\frac{k}{k}{\bf v})=\frac{1}{k}U(k{\bf v})$. Multiplying both parts of this equation by $k$, we obtain $U(k {\bf v})=k U({\bf v})$ for all $k>1$. Therefore, $U(k {\bf v})= kU({\bf v})$ for any $k>0$.

To prove that $U({\bf v}+{\bf u})=U({\bf v})+U({\bf u})$, consider the mixture ${\bf v}\frac{1}{2}{\bf u}$.
By mixture linearity of $U$ we have:
\begin{equation}
\label{eq:ml1}
U({\bf v}\frac{1}{2}{\bf u})=\frac{1}{2}U({\bf v})+\frac{1}{2}U({\bf u})=\frac{1}{2}\left(U({\bf v})+U({\bf u})\right).
\end{equation} 
On the other hand, ${\bf v}\frac{1}{2}{\bf u}=\frac{1}{2}{\bf v}+\frac{1}{2}{\bf u}=\frac{1}{2}({\bf v}+{\bf u})$. Therefore,
\begin{equation}
\label{eq:ml2}
U({\bf v}\frac{1}{2}{\bf u})=U\left(\frac{1}{2}({\bf v}+{\bf u})\right)= \frac{1}{2} U({\bf v}+{\bf u})
\end{equation}
Comparing \eqref{eq:ml1} and \eqref{eq:ml2} we conclude that $U({\bf v}+{\bf u})=U({\bf v})+U({\bf u})$.

Finally, note that 
\[
U({\bf 0})=U\left({\bf v}+({\bf -v})\right)=U({\bf v})+U({\bf -v})=0,
\] 
hence $U({\bf -v})=-U({\bf v})$. Therefore, if $k<0$, then $U(k {\bf v})= -kU({\bf -v})=kU({\bf v})$.

For each $T$, consider the function $f \colon \mathbb{R}^T \to \mathbb{R}$ defined as follows:
\[
f(v_1, \ldots, v_T)=U(v_1, \ldots, v_T, 0, 0, \ldots). 
\]
This function is linear on $\mathbb{R}^T$ and it satisfies $f({\bf 0})=0$, therefore,
\[ 
f(v_1, \ldots, v_T)=\displaystyle\sum_{t=1}^T w_t^T v_t,
\] 
where ${\bf w}^T=(w^T_1, \ldots, w^T_T)$. By monotonicity $w^T_t \geq 0$ for all $t \leq T$.

Note that $w^T_t=U([1]_t)$, where $[1]_t$ is the vector with 1 in period $t$ and $0$ elsewhere. It follows that $w^{T}_t=w^{T^{\prime}}_t$ for any $T$ and $T^{\prime}$.
Hence there is a vector ${\bf w} \in \mathbb{R}^\infty$ such that $U(v_1, \ldots, v_T, 0, 0, \ldots)=\displaystyle\sum_{t=1}^{\infty} w_t v_t$ for any $(v_1, \ldots, v_T)\in \mathbb{R}^T$.

Recalling that $v_t=u(x_t)$ we obtain 
\[
U(u(x_1), \ldots, u(x_T), 0, 0, \ldots)=\sum_{t=1}^T w_t u(x_t) \text{ for all } {\bf x} \in X^{*}.
\]
Therefore, for every ${\bf x, y} \in X^*$ we have ${\bf x} \succcurlyeq {\bf y}$ if and only if 
\[
\sum_{t=1}^T w_t u(x_t) \geq \sum_{t=1}^T w_t u(y_t).
\]
By slightly abusing the notation, re-define $U$ so that:
\[
U(x_1, \ldots, x_T, x_0, x_0, \ldots)=\sum_{t=1}^T w_t u(x_t) \text{ for all } {\bf x} \in X^{*}.
\]
Hence $U({\bf x})=\displaystyle\sum_{t=1}^{\infty} w_t u(x_t)$ represents preferences on $X^*$.

\paragraph{Step 3.} Next, we show that $U(x_1, x_2, \ldots)$ converges for any $(x_1, x_2, \ldots)$. 
Define $U_T\colon X^\infty \to {\mathbb R}$ as follows: $U_T({\bf x}) = \displaystyle \sum_{t=1}^T u_t(x_t)$, where $u_t(x_t)=w_t u(x_t)$. Consider the sequence of functions $U_1, U_2, \ldots, U_T, \ldots$
According to the \linebreak Cauchy Criterion, a sequence of functions $U_T({\bf x})$ defined on $X^\infty$ converges on $X^\infty$ if and only if for any $\varepsilon>0$ and any ${\bf x} \in X^\infty$ there exists $T \in \mathbb{N}$ such that $\lvert U_N({\bf x}) - U_M({\bf x}) \rvert < \varepsilon$ for any $N, M \geq T$.

Fix some ${\bf x}\in X^\infty$ and $\varepsilon>0$. Suppose that for some $k$ it is possible to choose $x^+, x^-$ such that $[x^+]_k \succ [x_k]_k \succ [x^-]_k$. By Step 2 the preference $[x^+]_k \succ [x_k]_k \succ [x^-]_k$ implies that $w_k>0$. Therefore, as $u$ is a continuous function, it is without loss of generality to assume that
\[
u_k(x^+)-u_k(x_k)<\varepsilon /2 \text{ and } u_k(x_k)-u_k(x^-)<\varepsilon /2.
\]
It follows that $u_k(x^+)-u_k(x^-)<\varepsilon, $ or $u_k(x^-)-u_k(x^+)>-\varepsilon.$
By Axiom I6 there exist $T^+$ and $T^-$ satisfying $k\leq \min \{T^-, T^+\}$ such that 
\[
{\bf x}_{k, N}^+ \succcurlyeq {\bf x} \succcurlyeq {\bf x}_{k, M}^-, \ \text{for all} \ N\geq T^+, \ M\geq T^-.
\]
Let $T^*=\max \{T^-, T^+\}$. It is necessary to demonstrate that $\lvert U_N({\bf x}) - U_M({\bf x}) \rvert < \varepsilon$ for any $N, M \geq T^*$. If $N=M$ the result is obviously true. If $N\neq M$ then it is without loss of generality to assume that $N>M$.
By the additive representation:
\[
U({\bf x}_{k,N}^+) \geq U({\bf x}_{k,M}^-).
\]
Expanding 
\[
u_k(x^+)+\sum_{t=1, t \neq k}^N u_t(x_t) \geq u_k(x^-)+\sum_{t=1,  t \neq k}^M u_t(x_t).
\]
By rearranging this inequality
\[
\sum_{t=M+1}^N u_t(x_t) \geq u_k(x^-)-u_k(x^+) > -\varepsilon.
\]
As $N>M \geq T^*$ it is also true that $U({\bf x}_{k, M}^+) \geq U({\bf x}_{k, N}^-$), hence
\[
\sum_{t=M+1}^N u_t(x_t) \leq u_k(x^+) - u_k(x^-) < \varepsilon.
\]
Note that
\[
\sum_{t=M+1}^N u_t(x_t) = U_N({\bf x})-U_M({\bf x}).
\]
Hence, $\lvert U_N({\bf x}) - U_M({\bf x}) \rvert < \varepsilon$ and it follows that $U({\bf x})$ converges by the Cauchy criterion.

Suppose now that it is not possible to find such $k$ that \linebreak $[x^+]_k \succ [x_k]_k \succ [x^-]_k $ for some $x^+, x^- \in X$. 
If $w_t=0$ for all $t$ then the result is trivial. Suppose that $w_t>0$ for some $t$. Then for every period $t$ for which $w_t>0$ we have 
\[
x_t \in X^e \equiv \{z \in X \ \colon \ z\succcurlyeq z' \ \text{for all} \ z'\in X \, \text{or} \ z'\succcurlyeq z \ \text{for all} \ z'\in X \}.
\]
For some $\lambda \in (0,1)$ replace $x_t$ with the mixture $x_t \lambda x_0$ for each $t$. Call the resulting stream ${\bf x}^*$. 
Then
\begin{equation*}
U_T({\bf x})-U_T({\bf x}^*)=\sum_{t=1}^T u_t(x_t)-\sum_{t=1}^T u_t(x_t \lambda x_0)=(1-\lambda)\sum_{t=1}^T u_t(x_t)=(1-\lambda)U_T({\bf x}).
\end{equation*}
By rearranging this equation it follows that $U_T({\bf x}^*)=\lambda U_T({\bf x})$. By the previous argument $U_T({\bf x}^*)$ converges, therefore, $U_T({\bf x})$ converges.

\paragraph{Step 4.} Show that $U({\bf x})$ represents the order on $X^\infty$. 
Suppose that ${\bf x} \succcurlyeq {\bf y}$, where ${\bf x}, {\bf y} \in X^\infty$. If for some $k, j$ it is possible to find $x^+, y^-$ such that  $[x^+]_k \succ [x_k]_k$ and $[y^-]_j \prec [y_j]_j$, then
$[x^+ \lambda x_k]_k \succ [x_k]_k$ for every $\lambda \in (0, 1)$ and $[y^- \mu y_j]_j \prec [y_j]_j$ for every $\mu \in (0,1)$. Let $x^*=x^+ \lambda x_k$ and $y^*=y^- \mu y_j$ for some $\lambda, \mu \in (0, 1)$. Denote
\[
{\bf x}_{k,N}^*=(x_1, \ldots, x_{k-1}, x^*, x_{k+1}, \ldots, x_N, x_0, x_0, \ldots),
\]
and
\[
{\bf y}_{j,M}^*=(y_1, \ldots, y_{j-1}, y^*, y_{j+1}, \ldots, y_M, x_0, x_0, \ldots).
\]
Then by Axiom I6, there exist $T^-, T^+$ such that 
\[
{\bf x}_{k,N}^*\succcurlyeq {\bf x}\succcurlyeq {\bf y} \succcurlyeq {\bf y}_{j,M}^*\
\]
for all $N\geq T^+$ and for all $M\geq T^{-}$.
Since ${\bf x}_{k,N}^* \succcurlyeq {\bf y}_{j,M}^*$ and $U$ represents $\succcurlyeq$ on $X^*$ we have:
\[
U({\bf x}_{k,N}^*)\geq U({\bf y}_{j,M}^*).
\]
By Step 3 we know that $U(x_1, \ldots, x_{k-1}, x^*, x_{k+1} \ldots)$ and \linebreak $U(y_1, \ldots, y_{j-1}, y^*, y_{j+1}, \ldots)$ converge, so
\[
U(x_1, \ldots, x_{k-1}, x^*, x_{k+1}, \ldots) \geq U(y_1, \ldots, y_{j-1}, y^*, y_{j+1}, \ldots).
\]
Recall that $x^*=x^+ \lambda x_k$ and $y^*=y^- \mu y_j$ for some $\lambda \in (0,1)$ and some $\mu \in (0,1)$. Since $\lambda$ and $\mu$ are arbitrary, it follows that $U({\bf x}) \geq U({\bf y})$.

If it is not possible to find $x^+$, $y^-$ such that $[x^+]_k \succ [x_k]_k$ and $[y^-]_j \prec [y_j]_j$, then either $w_t=0$ for all $t$, in which case $U({\bf x})=U({\bf y})$; or $x_t \succcurlyeq z'$ for all $z'\in X$ and all $t$ with $w_t>0$, in which case $U({\bf x}) \geq U({\bf y})$; or $z' \succcurlyeq y_t$ for all $z'\in X$ and all $t$ with $w_t>0$ in which case $U({\bf x}) \geq U({\bf y})$.

It is worth noting that as ${\bf x} \succcurlyeq {\bf y}$ implies $U({\bf x}) \geq U({\bf y})$, then, by Axiom I2 it follows that $w_t>0$ for at least one $t$. Therefore, $\sum_{t=1}^\infty w_t>0$. Normalizing by $1/\sum_{t=1}^\infty w_t$, we can assume that $\sum_{t=1}^\infty w_t=1$.

Next, assume that $U({\bf x}) \geq U({\bf y})$. 
Suppose that it is possible to find $k$ and $x^+, x^- \in X$ such that ${\bf x}_{k, N}^+\succcurlyeq {\bf x}\succcurlyeq {\bf x}_{k, N}^-$ for some fixed $N$.
By mixture continuity, the set $\{ \alpha \colon {\bf x}_{k,N}^+ \alpha {\bf x}_{k, N}^- \succcurlyeq \bf{x}\}$ is closed. By assumption ${\bf x}_{k, N}^+\succcurlyeq {\bf x}$, so it follows that $\alpha = 1$ is included into the set.
Analogously, the set $\{ \beta \colon {\bf x} \succcurlyeq {\bf x}_{k,N}^+ \beta {\bf x}_{k,N}^-\}$ is closed. In fact, $\beta = 0$ belongs to the set, as $ {\bf x} \succcurlyeq {\bf x}_{k, N}^-$.
Therefore, as both sets are closed, nonempty and form the unit interval, their intersection is nonempty. Hence, there exists $\lambda$ such that ${\bf x} \sim {\bf x}_{k, N}^+ \lambda {\bf x}_{k, N}^-$. 
Note that ${\bf x}_{k, N}^+ \lambda {\bf x}_{k,N}^- = (x_1, \ldots, x_{k-1}, x^+ \lambda x^-, x_{k+1} \ldots, x_N, x_0, x_0, \ldots)$. Let $x^+ \lambda x^- = x^*$.
Define ${\bf x}_{k, N}^*= (x_1, \ldots,x_{k-1}, x^*, x_{k+1}, \ldots, x_N, x_0, x_0, \ldots)$. Therefore, if there exist periods $k, j$ and outcomes $x^+, x^-, y^+, y^-\in X$ such that ${\bf x}_{k, N}^+ \succcurlyeq {\bf x} \succcurlyeq {\bf x}_{k, N}^-$ and ${\bf y}_{j, M}^+ \succcurlyeq {\bf y} \succcurlyeq {\bf y}_{j, M}^-$ for some $N$ and some $M$, we can find $\lambda, \mu \in [0, 1]$ such that ${\bf x} \sim {\bf x}_{k, N}^*$ and 
\[
{\bf y} \sim {\bf y}_{j, M}^* = (y_1, \ldots, y_{j-1}, y^*,y_{j+1}, \ldots, y_M, x_0, x_0, \ldots), 
\]
where $y^*=y^+\mu y^-$.
We have already shown that if ${\bf x}\succcurlyeq {\bf y}$ then $U({\bf x}) \geq U({\bf y})$. From ${\bf x} \sim {\bf x}_{k, N}^*$ and ${\bf y} \sim {\bf y}_{j, M}^*$ it therefore follows that: 
\[ 
U({\bf x}_{k, N}^*)= U({\bf x}) \text{ and } U({\bf y}_{j, M}^*)= U({\bf y}).
\]
Hence, from the assumption $U({\bf x}) \geq U({\bf y})$ we obtain:
\[
U({\bf x}_{k, N}^*) \geq U({\bf y}_{j, M}^*).
\]
Recall that $U$ is an order-preserving function on $X^*$.
Thus, ${\bf x}_{k, N}^* \succcurlyeq {\bf y}_{j, M}^*$. Since ${\bf x} \sim {\bf x}_{k, N}^*$ and ${\bf y} \sim {\bf y}_{j, M}^*$, we obtain ${\bf x} \succcurlyeq {\bf y}$.

Suppose now that there is no such $k, j$ or outcomes $x^+, x^-, y^+, y^-$ such that ${\bf x}_{k, N}^+\succcurlyeq {\bf x}\succcurlyeq {\bf x}_{k, N}^-$ and ${\bf y}_{j, M}^+\succcurlyeq {\bf y}\succcurlyeq {\bf y}_{j, M}^-$ for some $N$ and some $M$. Then, using Axiom I6, we can conclude that either $x_t \in X^e$ for every $t$ with $w_t>0$ or $y_t\in X^e$ for every $t$ with $w_t>0$.
Assume that there is only an upper bound to preferences; i.e., $X^e \equiv \{z \in X \ \colon \ z \succcurlyeq z' \ \text{for every} \ z'\in X\}$. Then $U({\bf x}) \geq U({\bf y})$ means that $x_t\in X^e$ whenever $w_t>0$. Therefore, $U({\bf x})=U({\bf \overline{x}})$, where ${\bf \overline{x}}=(\overline{x}, \overline{x}, \ldots)$ and $\overline{x} \in X^e$. Hence, it follows by monotonicity that ${\bf x}\succcurlyeq {\bf y}$. In the case when there is only a lower bound, i.e., $\underline{x} \in X^e \equiv \{z \in X \ \colon \ z' \succcurlyeq z \ \text{for every} \ z'\in X\}$, the argument is similar.

Next, suppose that $X$ is preference bounded above and below, i.e., there exist $\underline{x},\overline{x} \in X^e$ with $\overline{x}\succcurlyeq x \succcurlyeq \underline{x}$ for every $x\in X$. Assume that $U({\bf x}) \geq U({\bf y})$. We need to demonstrate that ${\bf x} \succcurlyeq {\bf y}$. By monotonicity and continuity there exist $\lambda, \mu \in [0, 1]$ such that ${\bf x} \sim {\bf \overline{x} }\lambda {\bf \underline{x}}$ and ${\bf y} \sim {\bf \overline{x}}\mu {\bf \underline{x}}$. Since by assumption $U({\bf x}) \geq U({\bf y})$ and $U$ represents the preference order on constant streams, we have $U({\bf \overline{x} }\lambda {\bf \underline{x}}) \geq U({\bf \overline{x} }\mu {\bf \underline{x}})$. By rearranging this inequality $(\lambda-\mu)(U({\bf \overline{x}})-U({\bf \underline{x}}))$, and using $U({\bf \overline{x}}) > U({\bf \underline{x}})$ it follows that $\lambda \geq \mu$. Therefore, as ${\bf x} \sim {\bf \overline{x}}\lambda {\bf \underline{x}}$ and ${\bf y} \sim {\bf \overline{x} }\mu {\bf \underline{x}}$ and $\lambda \geq \mu$, we conclude that ${\bf x} \succcurlyeq {\bf y}$.

Thus $({\bf w}, u)$ is an AA representation for $\succcurlyeq$.

\paragraph{Step 5.} Uniqueness of $w_t$. Assume that $({\bf w}^\prime, u^\prime)$ is another AA representation. Then, for any $t$ we have $w_t>0$ if and only if $w^\prime_t>0$.
 Consider the set of all constant programs $\{{\bf x}\in X^\infty: {\bf x} = (a, a, \ldots), \text{ where } a\in X\}$, which is a mixture set. Applying $({\bf w}^\prime, u^\prime)$ and $({\bf w}, u)$ to this set we conclude that $u(a)>u(b)$ if and only if $u^\prime(a)>u^\prime(b)$ for every $a, b \in X$. By Theorem 1  \cite{fishburn1982foundations} it implies that $u=A u^\prime+B$ for some $A>0$ and some $B$. 
Hence,
\[
 \sum_{t=1}^ \infty w_t u(x_t) \geq \sum_{t=1}^ \infty w_t u(y_t) \ \text{if and only if} \ \sum_{t=1}^ \infty w^\prime _t u(x_t) \geq \sum_{t=1}^ \infty w^\prime_t u(y_t).
\]
For any $t, s$ with $t \neq s$ and any $x', x'' \in X$, let $[x',x'']_{t,s}$ denote the stream with $x'$ in the t\textsuperscript{th} position, $x''$ in the s\textsuperscript{th} position and $x_0$ elsewhere. Fix $t, s$ with $w_t>0$ and $w_s>0$. Using non-triviality, choose some $x^+, x^- \in X$ such that $x^+ \succ x^-$. 
Define ${\bf x}=[x^+, x^+]_{t,s}, \ {\bf y}= [x^+, x^-]_{t,s}, \ {\bf z}= [x^-, x^-]_{t,s}$.
From the AA representation it follows that ${\bf x} \succ {\bf y} \succ {\bf z}$.
By continuity of the AA representation there exists $\lambda \in (0, 1)$ such that ${\bf y} \sim {\bf x} \lambda {\bf z}$.
Applying the AA representation to ${\bf y} \sim {\bf x} \lambda {\bf z}$ we obtain
\[
w_t u(x^+)+w_s u(x^-) =\lambda (w_t+w_s) u(x^+)+(1-\lambda) (w_t+w_s) u(x^-).
\]
It follows that $(1-\lambda)w_t = \lambda w_s$. Similarly, $(1-\lambda)w^\prime_t = \lambda w^\prime_s$. Therefore, $w_t/w_s =w^\prime_t/w^\prime_s$. As this is true for any $t, s$, we obtain that ${\bf w}=C {\bf w^\prime}$ for some $C>0$.
\end{proof}

{ \small
\section*{Acknowledgments}
I would like to thank my co-supervisor Matthew Ryan for thoughtful advice and support. I am grateful to Arkadii Slinko and Simon Grant for helpful discussions. I also would like to thank Australian National University for hospitality while working on the paper. Helpful comments from seminar participants at Australian National University, The University of Auckland and participants at Auckland University of Technology Mathematical Sciences Symposium (2014), Centre for Mathematical Social Sciences Summer Workshop (2014), the joint conferences "Logic, Game theory, and Social Choice 8" and "The 8th Pan-Pacific Conference on Game Theory" (2015) Academia Sinica are also gratefully acknowledged. Financial support from the University of Auckland is gratefully acknowledged.
}

\end{document}